\newtheorem{definition}{Definition}
\newtheorem{lemma}{Lemma}
\newtheorem{remark}{Remark}
\newtheorem{theorem}{Theorem}
\newtheorem{assumption}{Assumption}
\newtheorem{example}{Example}
\newcommand{\bL}{\bm{L}}
\newcommand{\ba}{\bm{a}}
\newcommand{\bx}{\bm{x}}
\newcommand{\bxs}{\bm{x}^{\ast}}
\newcommand{\bxsh}{\hat{\bm{x}}^{\ast}}
\newcommand{\by}{\bm{y}}
\newcommand{\bu}{\bm{u}}
\newcommand{\bA}{\bm{A}}
\newcommand{\bz}{\bm{z}}
\newcommand{\mbe}{\mathbf{e}}
\newcommand{\mbI}{\mathbf{I}}
\newcommand{\V}{\mathcal{V}}
\newcommand{\D}{\mathcal{D}}
\newcommand{\Gtask}{\mathcal{G}^{\psi}}
\newcommand{\Gcom}{\mathcal{G}^c}
\newcommand{\Etask}{\mathcal{E}^{\psi}}
\newcommand{\Ecom}{\mathcal{E}^c}
\newcommand{\Ntask}{\mathcal{N}^{\psi}}
\newcommand{\Ncom}{\mathcal{N}^c}
\newcommand{\R}{\mathbb{R}}
\newcommand{\alphb}{\bar{\alpha} }
\newcommand{\betastr}{\beta^\ast}
\newcommand{\I}{\mathcal{I}}
\newcommand{\bxtil}{\tilde{\bm{x}}}
\newcommand{\bxtils}{\tilde{\bm{x}}^\ast}
\newcommand{\bxtildot}{\dot{\tilde{\bm{x}}}}
\newcommand{\nualp}{\nu_{\alpha}}
\newcommand{\nubet}{\nu_{\beta}}
\newcommand{\mathdot}{\mathord{\cdot}}
\newcommand{\col}{\mathrm{col}}
\newcommand{\xest}[2]{\bx_{#1}^{(#2)}}
\newcommand{\xestvec}[1]{\bx^{(#1)}}
\newcommand{\xestvecdot}[1]{\dot{\bx}^{(#1)}}
\newcommand{\mbN}{\mathbb{N}}
\newcommand{\betab}{\bar{\beta}}
\newcommand{\betabstr}{\bar{\beta}^\ast}
\newcommand{\bxc}[1]{\bm{x}_{\I_{#1}}}
\newcommand{\bxcs}[1]{\bm{x}^{\ast}_{\I_{#1}}}
\newcommand{\ftil}{\tilde{f}}
\newcommand{\Lapla}{\mathcal{L}}
\def\@IEEEtablestring{table}
\long\def\@makecaption#1#2{%
	\ifx\@captype\@IEEEtablestring%
	\begin{center}{\footnotesize #1}\\{\footnotesize\scshape #2}\end{center}%
	\@IEEEtablecaptionsepspace
	\else
	\@IEEEfigurecaptionsepspace
	\setbox\@tempboxa\hbox{\footnotesize #1.~~ #2}%
	\ifdim \wd\@tempboxa >\hsize%
	\setbox\@tempboxa\hbox{\footnotesize #1.~~ }%
	\parbox[t]{\hsize}{\footnotesize \noindent\unhbox\@tempboxa#2}%
	\else%
	\ifcenterfigcaptions \hbox to\hsize{\footnotesize\hfil\box\@tempboxa\hfil}%
	\else \hbox to\hsize{\footnotesize\box\@tempboxa\hfil}%
	\fi\fi\fi}
\title{\LARGE \bf
Collaborative Satisfaction of Long-Term Spatial Constraints in Multi-Agent Systems: A Distributed Optimization Approach 
\\ (extended version)
}
\author{Farhad Mehdifar, Mani H. Dhullipalla, Charalampos P. Bechlioulis, and Dimos V. Dimarogonas
\thanks{This work is supported by ERC CoG LEAFHOUND, the KAW foundation, and the Swedish Research Council (VR).}
\thanks{F. Mehdifar, M. H. Dhullipalla, and D. V. Dimarogonas are with the Division of Decision and Control Systems, KTH Royal Institute of Technology, Stockholm, Sweden.   {\tt\small mehdifar@kth.se; manihd@kth.se; dimos@kth.se}}%
\thanks{C. P. Bechlioulis is with the Division of Systems and Control of the Department of Electrical and Computer Engineering at University of Patras, Patra, Greece. {\tt\small chmpechl@upatras.gr}}%
}
\begin{document}

\maketitle
\thispagestyle{empty}
\pagestyle{empty}

\begin{abstract}
	This paper addresses the problem of collaboratively satisfying long-term spatial constraints in multi-agent systems. Each agent is subject to spatial constraints, expressed as inequalities, which may depend on the positions of other agents with whom they may or may not have direct communication. These constraints need to be satisfied asymptotically or after an unknown finite time. The agents' objective is to collectively achieve a formation that fulfills all constraints. The problem is initially framed as a centralized unconstrained optimization, where the solution yields the optimal configuration by maximizing an objective function that reflects the degree of constraint satisfaction. This function encourages collaboration, ensuring agents help each other meet their constraints while fulfilling their own. When the constraints are infeasible, agents converge to a least-violating solution. A distributed consensus-based optimization scheme is then introduced, which approximates the centralized solution, leading to the development of distributed controllers for single-integrator agents. Finally, simulations validate the effectiveness of the proposed approach.
\end{abstract}
\section{Introduction}

Control and coordination of Multi-Agent Systems (MAS) have been a major research focus in the past decade, driven by tasks that require collaboration which are otherwise nearly impossible to achieve. Traditional MAS problems include consensus, rendezvous, flocking, formation, coverage, and containment \cite{chen2019control, oh2015survey, mehdifar20222}. However, recent research has shifted toward new demands, such as distributed optimal coordination \cite{zhang2017distributed} and handling high-level spatiotemporal (i.e., space and time) specifications in multi-robot systems \cite{lindemann2020barrier,lindemann2019feedback,liu2025controller,chen2024cooperative}, which do not explicitly lie within the classical MAS problems.

Distributed Optimization (DO) often involves minimizing a joint objective function using algorithms deployed across a network of communicating computation nodes (agents), where each agent knows only a small part of the problem and can communicate with a limited number of neighbors \cite{yang2019survey, nedic2018distributed, shorinwa2024distributed, wang2011control}. In multi-robot systems, DO offers a framework for developing local decision-making rules, addressing various challenges in cooperative robotics, such as surveillance, task allocation, optimal consensus, cooperative motion planning, self-organization, cooperative estimation, target tracking, and distributed SLAM. For a detailed exploration of DO's applications in multi-robot networks, see \cite{tron2016distributed, jaleel2020distributed, testa2023tutorial, shorinwa2024distributed}.

This paper introduces the problem of collaborative coordination in multi-agent systems under long-term spatial constraints. Each agent's position is subject to inequality type constraints that may depend on the positions of other agents, with whom they may or may not have direct communication. The objective is to collectively achieve a desired formation that satisfies all constraints. These constraints are long-term, as they only need to be satisfied asymptotically or after an unknown finite time. Furthermore, each agent must meet its own constraints while helping others satisfy theirs, despite lacking explicit knowledge of the other agents' constraints. 

Unlike distributed aggregative optimization in cooperative robotics \cite{li2021distributed, carnevale2022aggregative, testa2023tutorial}, where each agent’s local objective function must depend on the positions of all other agents, our formulation does not require this global dependency, thereby addressing a broader class of coordination problems.

We first reformulate the problem as a centralized optimization task, where maximizing the objective function results in a desired configuration that satisfies all constraints. The objective function is designed such that its positive values represent the satisfaction of the multi-agent constraints, with larger values indicating better overall constraints fulfillment. In cases where the constraints are collectively infeasible, solving the optimization problem provides a least-violating solution, reflected by a negative optimal value of the objective function. Next, we develop a novel multi-agent objective function as a sum of agents' local (private) objective functions, which depend solely on each agent’s spatial constraints. We demonstrate that minimizing the multi-agent objective function yields an approximate solution to the centralized optimization problem. This enables the design of distributed control protocols for single-integrator agents using distributed continuous-time consensus-based optimization algorithms. Additionally, we explore the sufficient conditions for convexity and strictly convexity of the multi-agent system's global objective function.


\section{Preliminaries}
\label{Sect:Prelim}

\noindent\textbf{Notation:} $\mbN$ and $\R$ denote the sets of natural and real numbers, respectively, while $\R^n$ represents the $n$-dimensional real space. Bold lowercase symbols denote vectors and vector functions, and bold uppercase symbols represent matrices. Non-bold symbols indicate scalar functions and variables. $\ba \in \R^n$ is an $n \times 1$ column vector, with $\ba^\top$ as its transpose. The Euclidean norm of $\ba$ is $\|\ba\|$. The concatenation operator is defined as $\col(\ba_i)_{i=1,\ldots,m} \coloneqq [\ba_1^\top, \ldots, \ba_m^\top]^\top \in \R^{mn}$, where $\ba_i \in \R^n$ for all $i \in \{1,\ldots,m\}$. The space of real $n \times m$ matrices is $\R^{n \times m}$. For a matrix $\bA \in \R^{n \times m}$, $\bA^\top$ denotes its transpose, and $\|\bA\|$ its induced norm. The absolute value of a real number is $|\cdot|$. $\mathbf{0}_n \in \R^n$ and $\mathbf{1}_n \in \R^n$ are the zero and ones vectors, respectively. $\mbI_n \in \R^{n \times n}$ is the $n$-dimensional identity matrix. The symbol $\otimes$ denotes the Kronecker product, and $\mbe_n^i \coloneqq [0 \ldots 1 \ldots 0]^\top \in \R^n$ is the $i$-th coordinate vector.

In the following we review the concept of log-convexity, which is a stronger notion of convexity.
\begin{definition}[Log-Convex Functions \cite{boyd2004convex, constantin2018convex}]
	A function $f : \R^n \rightarrow \R$ with convex domain $\D_f$ is log-convex if $f(\bx) > 0$ for all $\bx \in \D_f$ and $\log f(\bx)$ is convex. Equivalently, $f$ is log-convex if for all $\bx, \by \in \D_f$ and $0 \leq \theta \leq 1$, the following holds:
	\begin{equation} \label{eq:log_cvx_def}
		f(\theta \bx + (1-\theta)\by) \leq f(\bx)^\theta f(\by)^{1-\theta}.
	\end{equation}
	If strict inequality holds in \eqref{eq:log_cvx_def} for $\bx \neq \by$ and $0 < \theta < 1$, then $f$ is strictly log-convex, meaning $\log f$ is strictly convex.
\end{definition}

Log-convexity implies convexity, but the reverse does not hold. The following lemma summarizes operations that preserve log-convexity.
\begin{lemma}[Log-Convexity Preserving Operations] \label{lem:log_cvx_perserve}
	Log-convexity is preserved under positive scaling, positive powers, multiplication, and addition \cite{boyd2004convex,constantin2018convex}.
\end{lemma}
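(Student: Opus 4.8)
The plan is to dispatch the three "multiplicative" operations by directly invoking convexity of $\log f$, and to treat addition separately via H\"older's inequality. In every case the domain of the result is an intersection of the convex domains $\D_f$ (and $\D_g$), hence convex, and positivity is immediate because all functions involved are strictly positive on their domains; so the only thing to check is convexity of the logarithm of the resulting function.

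For positive scaling by $a > 0$, write $\log\big(a f(\bx)\big) = \log a + \log f(\bx)$, which is convex as a constant plus a convex function. For a positive power $p > 0$, write $\log\big(f(\bx)^p\big) = p\,\log f(\bx)$, convex as a nonnegative multiple of a convex function. For the product of log-convex $f$ and $g$, write $\log\big(f(\bx)g(\bx)\big) = \log f(\bx) + \log g(\bx)$ on $\D_f \cap \D_g$, convex as a sum of convex functions; an induction extends this to finitely many factors. The substantive case is addition. Fix $\bx, \by \in \D_f \cap \D_g$ and $\theta \in (0,1)$. Applying the defining inequality \eqref{eq:log_cvx_def} to $f$ and to $g$ and adding,
\begin{equation*}
(f+g)\big(\theta \bx + (1-\theta)\by\big) \;\leq\; f(\bx)^\theta f(\by)^{1-\theta} + g(\bx)^\theta g(\by)^{1-\theta}.
\end{equation*}
I would then apply H\"older's inequality for finite sums with conjugate exponents $1/\theta$ and $1/(1-\theta)$, taking the two-term sequences $a_1 = f(\bx)^\theta,\ a_2 = g(\bx)^\theta$ and $b_1 = f(\by)^{1-\theta},\ b_2 = g(\by)^{1-\theta}$, so that $a_k^{1/\theta}$ recovers $f(\bx), g(\bx)$ and $b_k^{1/(1-\theta)}$ recovers $f(\by), g(\by)$. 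This yields
\begin{equation*}
f(\bx)^\theta f(\by)^{1-\theta} + g(\bx)^\theta g(\by)^{1-\theta} \;\leq\; \big(f(\bx)+g(\bx)\big)^\theta \big(f(\by)+g(\by)\big)^{1-\theta},
\end{equation*}
and chaining the two displays gives precisely \eqref{eq:log_cvx_def} for $f+g$; strict positivity of $f+g$ is clear, and summation of more than two log-convex functions follows by induction (or from the general H\"older inequality).

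The only genuinely nontrivial step — and thus the part I would write out most carefully — is the H\"older estimate in the addition case; the remaining claims reduce to one-line identities for $\log$. It is worth recording as an immediate consequence that any positive linear combination of log-convex functions is log-convex (scaling followed by addition), which is the form in which Lemma~\ref{lem:log_cvx_perserve} is used later.
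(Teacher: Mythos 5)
Your proof is correct. The paper does not actually prove this lemma---it is stated as a known fact and deferred to the cited references---so there is no in-paper argument to compare against line by line. Your derivations for scaling, powers, and products are the standard one-line identities for $\log$, and your treatment of addition via H\"older's inequality with conjugate exponents $1/\theta$ and $1/(1-\theta)$ is exactly the classical argument; notably, it is also precisely the technique the paper itself deploys later, in the proof of Lemma~\ref{lem:strict_cvx_global_obj_fun}, where H\"older's inequality (Lemma~\ref{lem:holder_ineq} with $1/p = \theta$) is used to pass from $\sum_k e^{-\nualp\theta\psi_{i,k}(\bx_i,\cdot)}e^{-\nualp(1-\theta)\psi_{i,k}(\by_i,\cdot)}$ to $h_i(\bx_i,\cdot)^\theta h_i(\by_i,\cdot)^{1-\theta}$. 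Your closing observation that positive linear combinations of log-convex functions are log-convex is indeed the form in which the lemma is invoked in the proof of Lemma~\ref{lem:objective_fun_convexity}.
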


\begin{lemma}[Hölder's Inequality] \label{lem:holder_ineq}
	For $p>1$, $\frac{1}{p} + \frac{1}{q} = 1$, and $\bx = [x_1, \ldots, x_n]^\top,\by = [y_1, \ldots, y_n]^\top \in \R^n$ the following inequality holds:
	\begin{equation}
		\sum_{i=1}^{n} x_i y_i \leq \left(\sum_{i=1}^{n} |x_i|^p \right)^{\frac{1}{p}} \left(\sum_{i=1}^{n} |y_i|^q \right)^{\frac{1}{q}}.
	\end{equation}
\end{lemma}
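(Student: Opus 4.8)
The plan is to derive the inequality from the scalar Young's inequality, which is the only ingredient requiring any argument. \textbf{Step 1 (degenerate cases).} First I would dispose of the cases in which $\left(\sum_{i=1}^n |x_i|^p\right)^{1/p} = 0$ or $\left(\sum_{i=1}^n |y_i|^q\right)^{1/q} = 0$. In either case $\bx = \mathbf{0}_n$ or $\by = \mathbf{0}_n$, so the left-hand side equals $0$ and the inequality holds trivially. Hence for the remainder we may assume both quantities are strictly positive, and set $A := \left(\sum_{i=1}^n |x_i|^p\right)^{1/p} > 0$ and $B := \left(\sum_{i=1}^n |y_i|^q\right)^{1/q} > 0$.

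\textbf{Step 2 (Young's inequality).} Next I would establish that for all $a,b \geq 0$ and $p,q>1$ with $\frac1p + \frac1q = 1$,
\[
ab \;\leq\; \frac{a^p}{p} + \frac{b^q}{q}.
\]
For $a,b>0$ this follows from convexity of the exponential: writing $ab = \exp\!\left(\tfrac1p \log a^p + \tfrac1q \log b^q\right)$ and applying convexity of $\exp$ with the weights $\tfrac1p,\tfrac1q$ (which sum to $1$) yields $ab \leq \tfrac1p \exp(\log a^p) + \tfrac1q \exp(\log b^q) = \tfrac{a^p}{p} + \tfrac{b^q}{q}$. The cases $a=0$ or $b=0$ are immediate. (Equivalently, one can invoke concavity of $\log$ directly.)

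\textbf{Step 3 (normalization and summation).} Define $u_i := |x_i|/A$ and $v_i := |y_i|/B$ for $i = 1,\ldots,n$, so that $\sum_{i=1}^n u_i^p = 1$ and $\sum_{i=1}^n v_i^q = 1$. Applying Step 2 to each pair $(u_i,v_i)$ and summing over $i$ gives
\[
\sum_{i=1}^n u_i v_i \;\leq\; \frac1p \sum_{i=1}^n u_i^p + \frac1q \sum_{i=1}^n v_i^q \;=\; \frac1p + \frac1q \;=\; 1,
\]
that is, $\sum_{i=1}^n |x_i|\,|y_i| \leq AB$. Finally, since $\sum_{i=1}^n x_i y_i \leq \sum_{i=1}^n |x_i y_i| = \sum_{i=1}^n |x_i|\,|y_i|$, the claimed bound follows. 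All three steps are elementary, so there is no genuine obstacle; the only point needing a short justification is Young's inequality in Step 2, and even that reduces to convexity of $\exp$.
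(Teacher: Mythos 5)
Your proof is correct and complete: the paper states H\"older's inequality as a standard classical fact without providing any proof, and your argument is the canonical one (degenerate cases, Young's inequality via convexity of $\exp$, then normalization and summation), with the final passage from $\sum_i x_i y_i$ to $\sum_i |x_i|\,|y_i|$ handled properly. There is nothing to reconcile with the paper, since no proof is given there.
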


\section{Problem Formulation}
\label{Sect:ProblemFormu}

Consider a multi-agent system operating in a $d$-dimensional space, with $N \in \mbN$ agents, each governed by the single integrator dynamics:
\begin{equation}\label{eq:agent_single_integ_dyn}
	\dot{\bx}_i = \bu_i, \quad i = 1, \ldots, N,
\end{equation}
where $\bx_i \in \R^d$ represents the position of agent $i$, and $\bu_i \in \mathbb{R}^d$ is its velocity control input. Define the stacked vector of all agents' positions as $\bx \coloneqq [\bx_1^\top, \bx_2^\top, \ldots, \bx_N^\top]^\top \in \R^{Nd}$.

Let \( \V \coloneqq \{1, \ldots, N\} \) represent the index set of agents, and define \(\I_i \subseteq \V \setminus \{i\}\) as the set of agents with which agent \(i\) has spatial coupling constraints. Furthermore, let \(n_i \in \mathbb{N}\) denote the number of agents in \(\I_i\) (i.e., \( |\I_i| = n_i \)). Define \(\bxc{i} \coloneqq \col(\bx_j)_{j \in \I_i} \in \mathbb{R}^{n_i d}\) as the stacked positions of the agents that are involved in the spatial constraints of agent \(i\). Given this notation, assume agent \(i\) is subject to \(m_i \in \mathbb{N}\) spatial \textit{long-term constraints}, formulated as inequalities:
\begin{equation} \label{eq:agent_const_predicateform}
	\psi_{i,k}(\bx_i,\bxc{i}) > 0, \quad k = 1, \ldots, m_i, \quad i = 1 , \ldots, N,
\end{equation}
where \(\psi_{i,k}: \mathbb{R}^d \times \mathbb{R}^{n_i d} \rightarrow \mathbb{R}\) are continuously differentiable constraint (or task) functions. These constraints are termed long-term because they are required to hold only after an unknown but finite time, or as \(t \rightarrow +\infty\).

\begin{remark}\label{rem:agent i's_constraint_forms}
	The notation $\psi_{i,k}(\bx_i, \bxc{i})$  does not imply that all of agent $i$'s constraints depend on every other agent's position in the set $\I_i$. For instance, in a MAS with 5 agents, where $\I_1 = \{2,3,5\}$ and $m_1 = 4$, agent 1's constraint functions could take the form $\psi_{1,1}(\bx_1) > 0, \psi_{1,2}(\bx_1, \bx_2) > 0, \psi_{1,3}(\bx_1, \bx_5) > 0$, and $\psi_{1,4}(\bx_1, \bx_2, \bx_3) > 0$.
\end{remark}

\textbf{Objective:} Develop a distributed control protocol that guides agents to a desired spatial multi-agent formation, \( \bxs_1, \bxs_2, \ldots, \bxs_N \), satisfying all constraints \( \psi_{i,k}(\bxs_i, \bxcs{i}) > 0 \), for \( k = 1, \ldots, m_i \) and \( i = 1, \ldots, N \), assuming the constraints are feasible. If any constraints are infeasible, the protocol should instead direct the agents toward a formation in \( \mathbb{R}^d \) with the least violation of constraints.

\section{Main Results}
\label{sec:main_results}

In this section, we recast the problem of collaboratively fulfilling spatial long-term constraints in MAS as an unconstrained distributed optimization problem. We first introduce and differentiate two graph types that model agents' coupling constraints (tasks) and their communication capabilities. Next, we reformulate the problem into an unconstrained centralized optimization framework and propose a distributed strategy to approximate its solution.

\subsection{Task Dependency and Communication Graphs}
\label{sec:graphs}

Inspired by \cite{liu2025controller}, we introduce the task dependency graph for the MAS. As shown in \eqref{eq:agent_const_predicateform}, each agent’s constraints (tasks) may depend on the positions of other agents in the system. To capture this, we define a \textit{directed} graph \(\Gtask(\V, \Etask)\), referred to as the multi-agent \textit{task dependency graph}, where \(\Etask \subseteq \V \times \V\) represents the set of directed edges. An edge \((i,j) \in \Etask\) exists if any constraint function of agent \(i\) in \eqref{eq:agent_const_predicateform} depends on agent \(j\)’s position \(\bx_j\). A self-edge \((i,i)\) exists if agent \(i\) has a constraint that depends solely on \(\bx_i\), i.e., $\psi_{i,k}(\bx_i) > 0$. We define \(\Ntask_i = \{j \in \V \mid (i,j) \in \Etask \}\) as the set of agent \(i\)’s (out) neighbors in \(\Gtask\). The presence of an edge \((i,j)\) indicates that agent \(i\) must satisfy at least one constraint relative to its (out) neighbor agent \(j\), but not vice versa. A set \( \V^M \subseteq \V \) is called a \textit{maximal dependency cluster} of \(\Gtask\) if for all \(i,j \in \V^M\), \(i\) and \(j\) are connected\footnote{Here, \(i\) and \(j\) are considered connected if there is an undirected path along the directed edges in \(\Etask\) connecting nodes \(i\) and \(j\).}, and for all \(q \in \V \setminus \V^M\), \(i \in \V^M\), \(i\) and \(q\) are not connected. Thus, there are no task dependencies between different maximal dependency clusters.

We define \(\Gcom(\V, \Ecom)\) as the \textit{communication graph} of the agents, where \(\Ecom \subseteq \V \times \V\) is the set of undirected edges, consisting of unordered pairs \((i,j) \in \Ecom\) for all \(i,j \in \V\), indicating communication links between agents \(i\) and \(j\). Additionally, \(\Ncom_i = \{j \in \V \mid (i,j) \in \Ecom\}\) denotes the set of agent \(i\)’s neighbors in the undirected communication graph \(\Gcom\).  A communication link between agents \(i\) and \(j\) signifies they can exchange local information. For each maximal dependency cluster in \(\Gtask(\V, \Etask)\), we associate a communication subgraph \( \mathcal{G}^{c}_i(\V_i^M, \mathcal{E}_i^{c}) \), for \(i = 1, \ldots, m_{dc}\), where \(m_{dc} \leq N\) represents the number of clusters, and \(\V_i^M \subseteq \V\) and \(\Ecom_i \subseteq \Ecom\).

\begin{assumption} \label{assu:connceted_commu_graph}
	The communication subgraphs $\mathcal{G}^{c}_i(\V_i^M, \mathcal{E}_i^{c})$, $i = 1, \ldots, m_{dc}$, corresponding to maximal dependency clusters are undirected and connected.
\end{assumption}

\begin{example} \label{ex:graphs_example}
	Consider a MAS with \( N = 7 \) agents and the following long-term constraints: For agent 1 (\( m_1 = 2, \I_1=\{2\} \)): \( \psi_{1,1}(\bx_1) > 0 \), \( \psi_{1,2}(\bx_1, \bx_2) > 0 \); for agent 2 (\( m_2 = 2, \I_2=\{1\} \)): \( \psi_{2,1}(\bx_2) > 0 \), \( \psi_{2,2}(\bx_2, \bx_1) > 0 \); for agent 3 (\( m_3 = 1, \I_3=\{1\} \)): \( \psi_{3,1}(\bx_3, \bx_1) > 0 \); for agent 4 (\( m_4 = 2, \I_4=\{1,2,3\} \)): \( \psi_{4,1}(\bx_4, \bx_1) > 0 \), \( \psi_{4,2}(\bx_4, \bx_2, \bx_3) > 0 \); for agent 5 (\( m_5 = 3, \I_5=\{3,4\} \)): \( \psi_{5,1}(\bx_5) > 0 \), \( \psi_{5,2}(\bx_5, \bx_3) > 0 \), \( \psi_{5,3}(\bx_5, \bx_4) > 0 \); for agent 6 (\( m_6 = 2, \I_6=\{7\} \)): \( \psi_{6,1}(\bx_6) > 0 \), \( \psi_{6,2}(\bx_6, \bx_7) > 0 \); and for agent 7 (\( m_7 = 1, \I_7=\emptyset \)): \( \psi_{7,1}(\bx_7) > 0 \). Fig.\ref{fig:task_depen_graph} illustrates the directed task dependency graph \(\Gtask(\V, \Etask)\) of the MAS. Based on the inter-agent constraints, two maximal dependency clusters are identified: \(\V_1^M = \{1, 2, 3, 4, 5\}\) and \(\V_2^M = \{6, 7\}\). Fig. \ref{fig:commun_graph} illustrates a possible undirected communication graph for the system. In accordance with \Cref{assu:connceted_commu_graph}, any fixed, undirected, and connected communication graph is valid for each cluster. As depicted, agent 4 lacks a direct communication link with agents 1 and 2, despite needing to satisfy certain constraints related to them. The same holds for agents 5 and 3. Conversely, even though no tasks exist between agents 2 and 3, they can still be neighbors in the communication graph.
\end{example}

\begin{figure}[!tbp]
	\centering
	\begin{subfigure}[t]{0.55\linewidth}
		\centering
		\includegraphics[width=\linewidth]{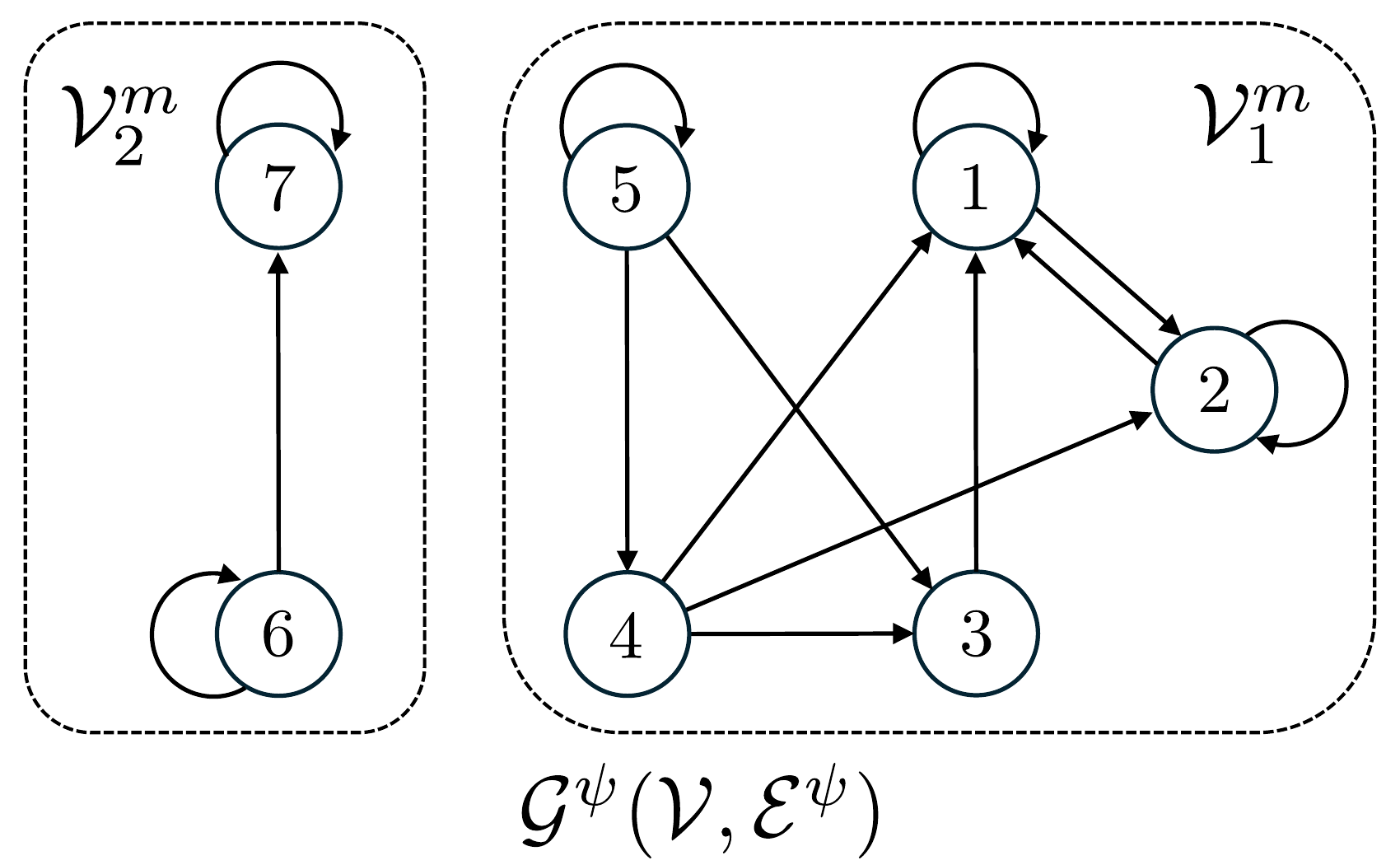}
		\caption{}
		\label{fig:task_depen_graph}
	\end{subfigure}%
	~~
	\begin{subfigure}[t]{0.435\linewidth}
		\centering
		\includegraphics[width=\linewidth]{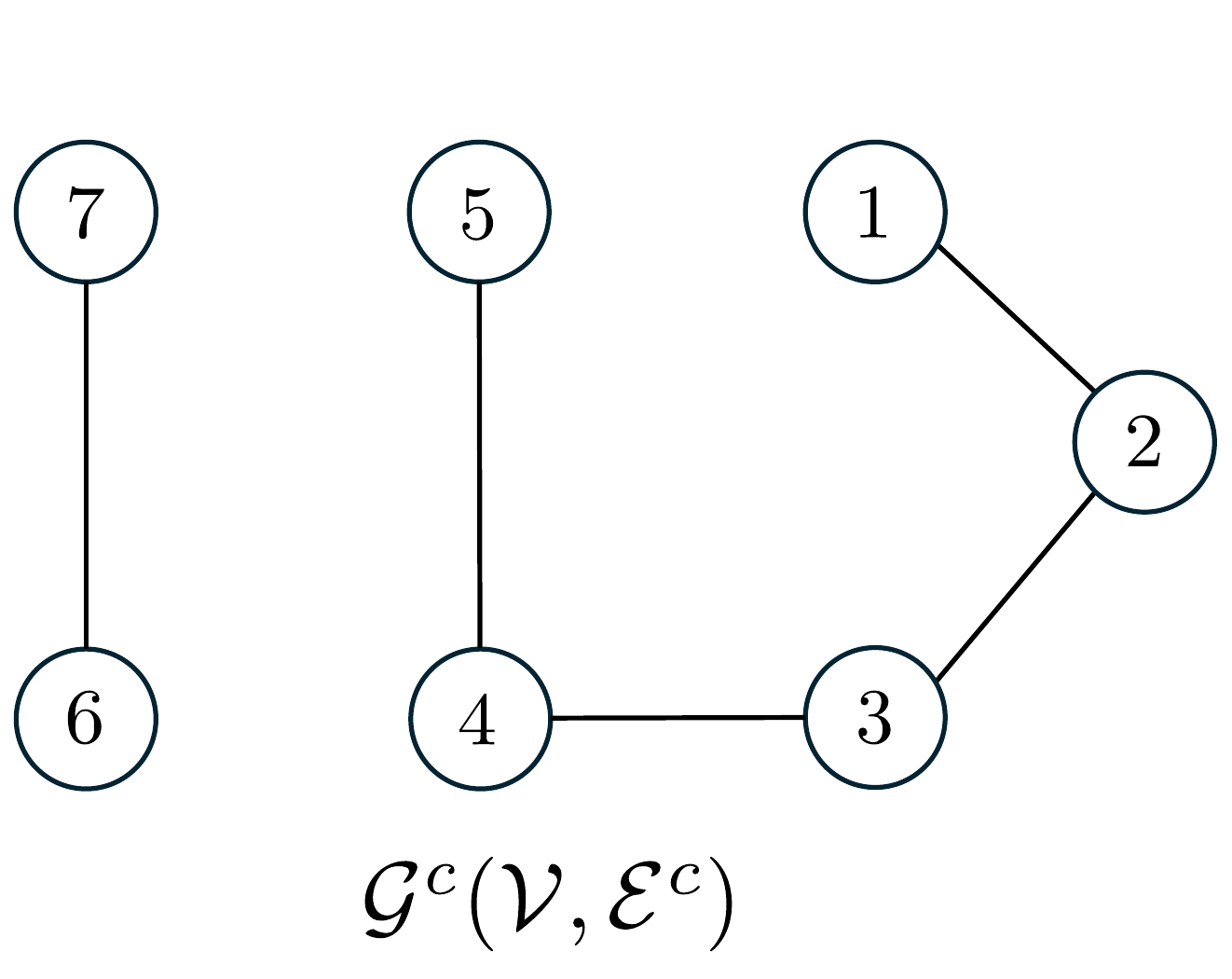}
		\caption{}
		\label{fig:commun_graph}
	\end{subfigure}
	\caption{(a) Directed task dependency graph composing of two maximal dependency clusters. (b) Undirected communication graph of multi-agent system with two connected components corresponding to each maximal dependency cluster in the task dependency graph. \vspace{-0.5cm}}
	\label{fig:task&commun_graphs}
\end{figure}

\begin{remark}
	As highlighted in Example \ref{ex:graphs_example}, Assumption \ref{assu:connceted_commu_graph} does not require neighboring agents in the task dependency graph $\Gtask$ to be neighbors in the communication graph $\Gcom$, or vice versa.
\end{remark}

Since the spatial constraints (tasks) within each maximal dependency cluster are independent of others, coordination of agents in each cluster can be handled separately. Hence, without loss of generality, we assume the following:
\begin{assumption}\label{assu:one_maximal_cluster}
	The task dependency graph $\Gtask(\V, \Etask)$ is composed of one maximal dependency cluster.
\end{assumption}

\begin{remark} \label{rem:shared_constraints}
	Neighboring agents \(i\) and \(j\) in the task dependency graph \(\mathcal{G}^d\), may or may not share a common constraint. For example, consider the coupled constraints \(\psi_{1,2}(\bx_1, \bx_2)\) and \(\psi_{2,2}(\bx_2, \bx_1)\) of agents 1 and 2  in \Cref{ex:graphs_example}. If \(\psi_{1,2}(\bx_1, \bx_2) \equiv \psi_{2,2}(\bx_2, \bx_1)\), i.e., both constraints are identical, the agents collaborate to satisfy it. If \(\psi_{1,2}(\bx_1, \bx_2) \not\equiv \psi_{2,2}(\bx_2, \bx_1)\), agent 1 is responsible for \(\psi_{1,2}(\bx_1, \bx_2) > 0\), while agent 2 is responsible for \(\psi_{2,2}(\bx_2, \bx_1) > 0\).
\end{remark}

\subsection{Reformulating the Problem as a Centralized Optimization} 
\label{subsubsec:Centralized_Opt}

Inspired by the method in \cite{mehdifar2023control, mehdifar2024low}, we consolidate each agent's constraints into a single constraint as follows:
\begin{equation} \label{eq:agent_i_consolidated_const}
	\alphb_i(\bx_i,\bxc{i}) > 0, \quad i = 1, \ldots, N,
\end{equation}
where $\alphb_i: \R^d \times \R^{n_i d} \rightarrow \R$ is given by:
\begin{flalign} \label{eq:Consolidated_Const_Fun}
	&\alphb_i(\bx_i,\bxc{i}) \coloneqq \min \{ \psi_{i,1}(\bx_i,\bxc{i}), \ldots , \psi_{i,m_i}(\bx_i,\bxc{i}) \}.\!\!\!\!\!\!& 
\end{flalign}
Clearly, if \( \alphb_i(\bx_i,\bxc{i}) \leq 0 \), then at least one of agent \(i\)'s constraints is violated. We call \( \alphb_i(\bx_i,\bxc{i}) \) the \textit{consolidated constraint (task) function} of agent \(i\).

We can now merge the consolidated constraints of all $N$ agents into a single global constraint, representing satisfaction of the spatial constraints for the entire MAS as follows:
\begin{equation}\label{eq:MAS_consolidated_const}
	\betab(\bx) \coloneqq \min \{ \alphb_1(\bx_1,\bxc{1}), \ldots , \alphb_N(\bx_N,\bxc{N}) \} > 0.
\end{equation}
Note that when \(\betab(\bx) \leq 0\), at least one agent fails to meet one or more of its constraints. We refer to \(\betab(\bx)\) as the \textit{global constraint (task) function of the MAS}. Both functions in \eqref{eq:Consolidated_Const_Fun} and \eqref{eq:MAS_consolidated_const} are continuous, yet typically nonsmooth.

If $\betab(\mathdot)$ has compact level sets (see Lemma \ref{lem:radially_minus_beta}), then since it is continuous, \cite[Proposition 2.10]{Grippo2023intro} guarantees that $\betab(\mathdot)$ possesses at least one global maximizer. Thus, we can define:
\begin{equation}\label{eq:MAS_CollectiveConst_MAX}
	\betabstr \coloneqq \max_{\bx} \betab(\bx), 
\end{equation}
and
\begin{equation} \label{arg_centeral_opt}
	\bxs = \arg\max_{\bx} \betab(\bx),
\end{equation}
where $\bxs$ represents the optimal configuration of agents, maximizing the MAS's global constraint function. Here, \(\betab(\bx)\) serves as the objective function in \eqref{arg_centeral_opt}, with its optimal value indicating how well the spatial constraints are satisfied. If $\betabstr = \betab(\bxs) > 0$, all agents' constraints are \textit{collectively feasible}, and \(\bxs\) satisfies all constraints. If $\betabstr = \betab(\bxs) \leq 0$, at least one agent fails to meet one or more of its constraints, making \(\bxs\) the \textit{least-violating multi-agent formation} that maximizes $\betab(\mathdot)$. Thus, solving \eqref{arg_centeral_opt} promotes cooperative behavior among agents to effectively satisfy both individual and coupled long-term constraints in the MAS. Note that solving the optimization problem \eqref{arg_centeral_opt} requires information from all agents, leading to a centralized solution.

\begin{assumption} \label{assu:compact_level_set_beta_bar}
	At least one of the MAS constraint functions $\psi_{i,k}(\bx_i,\bxc{i}), k = 1, \ldots, m_i, i = 1, \ldots, N$, approaches $-\infty$ along any trajectory in $\R^{N d}$ on which $\|\bx\| \rightarrow +\infty$.
\end{assumption}

\begin{remark} \label{rem:aux_const}
	Assumption \ref{assu:compact_level_set_beta_bar} is not restrictive in practice and typically holds for well-posed multi-agent spatial constraints. It can always be satisfied by introducing an auxiliary individual constraint for each agent, \(\psi_{i, \text{aux}}(\bx_i) \coloneqq c_{\mathrm{aux}} - \|\bx_i\|^2 > 0, i = 1 , \ldots, N\), where \(c_{\mathrm{aux}} > 0\) is a sufficiently large constant. This constraint defines a large ball around the origin in the MAS operating space, encompassing all other spatial constraints of agent \(i\) without interfering with them.
\end{remark}

As the following lemma shows, Assumption \ref{assu:compact_level_set_beta_bar} is necessary and sufficient for the level sets of \(\betab(\bx)\) to be compact.

\begin{lemma} \label{lem:radially_minus_beta}
	Function $-\betab(\mathdot)$ (and thus $\betab(\mathdot)$) has compact level sets if and only if Assumption \ref{assu:compact_level_set_beta_bar} holds. 
\end{lemma}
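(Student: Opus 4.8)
\emph{Proof plan.} The statement is a coercivity-type characterization, so my plan is to combine the standard fact that a \emph{continuous} function on $\R^{Nd}$ has all its level sets compact iff it is radially unbounded with the explicit structure obtained by collapsing the nested minima in \eqref{eq:Consolidated_Const_Fun}--\eqref{eq:MAS_consolidated_const}, namely $\betab(\bx) = \min\{\psi_{i,k}(\bx_i,\bxc{i}) : i\in\V,\ 1\le k\le m_i\}$, a finite minimum. First note that $\betab(\cdot)$, being a finite minimum of the continuously differentiable $\psi_{i,k}$, is continuous, hence so is $-\betab(\cdot)$; therefore every set $\{\bx : -\betab(\bx)\le c\} = \{\bx : \betab(\bx)\ge -c\}$ is closed, and for these sets ``compact'' reduces to ``bounded''. (Since $\betab$ is bounded above — it attains $\betabstr$ — its super-level sets are exactly the sub-level sets of $-\betab$, so the parenthetical ``and thus $\betab(\cdot)$'' in the statement is literally the same family of sets; I would dispatch that remark in one line.)

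\emph{Sufficiency ($\Leftarrow$).} Argue by contradiction: if some super-level set $\{\bx : \betab(\bx)\ge -c\}$ were unbounded, pick a sequence $\bx^{(n)}$ in it with $\|\bx^{(n)}\|\to+\infty$. By Assumption~\ref{assu:compact_level_set_beta_bar}, along that sequence at least one constraint function $\psi_{i,k}$ tends to $-\infty$, while $\psi_{i,k}(\bx^{(n)}_i,\bxc{i}^{(n)}) \ge \betab(\bx^{(n)}) \ge -c$ for all $n$, a contradiction. Hence every super-level set of $\betab$ (equivalently, every sub-level set of $-\betab$) is bounded, and being closed it is compact.

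\emph{Necessity ($\Rightarrow$) and the main obstacle.} Suppose $-\betab(\cdot)$ has compact level sets and let $\bx(t)$ be any trajectory with $\|\bx(t)\|\to+\infty$. If \emph{no} constraint went to $-\infty$ along $\bx(t)$, then each $\psi_{i,k}(\bx_i(t),\bxc{i}(t))$ would be bounded below, say by $-c_{i,k}$ for large $t$; taking $c:=\max_{i,k} c_{i,k}$ gives $\betab(\bx(t)) = \min_{i,k}\psi_{i,k}(\cdot)\ge -c$ eventually, so $\bx(t)$ would lie in the compact (hence bounded) set $\{\bx: -\betab(\bx)\le c\}$, contradicting $\|\bx(t)\|\to+\infty$. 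The only delicate point is what ``at least one $\psi_{i,k}$ approaches $-\infty$'' means: since there are finitely many indices $(i,k)$ and the minimum defining $\betab(\bx(t))$ is attained at each $t$ by one of them, a pigeonhole argument gives an index $(i^\ast,k^\ast)$ and times $t_n\to\infty$ with $\psi_{i^\ast,k^\ast}(\bx(t_n)) = \betab(\bx(t_n))\to-\infty$, i.e.\ that constraint is \emph{unbounded below} along $\bx(t)$. One should therefore read ``approaches $-\infty$ along the trajectory'' as ``$\liminf = -\infty$ along the trajectory'' (an oscillating pair of constraints can drive the minimum to $-\infty$ without either converging to $-\infty$); with that reading the equivalence is exact, and I expect this to be the step deserving the most care — everything else is routine closed-plus-bounded bookkeeping.
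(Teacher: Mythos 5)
Your proof is correct and follows essentially the same route as the paper's: both reduce compactness of the sub-level sets of $-\betab(\cdot)$ to radial unboundedness of the continuous function $-\betab(\cdot)$, and both then unwind the $\min$/$\max$ structure of \eqref{eq:Consolidated_Const_Fun}--\eqref{eq:MAS_consolidated_const}. Your flattening of the nested minima into a single finite minimum over all pairs $(i,k)$, versus the paper's two-stage decomposition $-\betab = \max_i(-\alphb_i)$ and $-\alphb_i = \max_k(-\psi_{i,k})$, is purely cosmetic, as is re-deriving ``compact level sets $\Leftrightarrow$ radially unbounded'' via closed-plus-bounded rather than citing it. The one place where you go beyond the paper is the necessity direction: the paper asserts that $\max_k(-\psi_{i,k}) \to +\infty$ \emph{if and only if} at least one $-\psi_{i,k}$ tends to $+\infty$, which, read with full limits, fails in the ``only if'' direction for exactly the reason you identify --- two constraints can alternate in driving the minimum to $-\infty$ while neither converges to $-\infty$ individually. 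Your pigeonhole argument, combined with reading ``approaches $-\infty$ along the trajectory'' as $\liminf = -\infty$, is the correct repair and makes the equivalence exact; the paper's proof silently elides this point. The only remaining mismatch is that Assumption~\ref{assu:compact_level_set_beta_bar} is phrased in terms of trajectories while your sufficiency step extracts a sequence from an unbounded super-level set; since the paper is equally informal here and any such sequence can be viewed as a sampled trajectory, this is not a gap.
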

\begin{proof}
	Since \(\betab(\cdot)\) is continuous, from \cite[Proposition 2.9]{Grippo2023intro}, \( -\betab(\bx) \) has compact level sets if and only if it is radially unbounded. From \eqref{eq:MAS_consolidated_const}, we can express \( -\betab(\bx) = \max \{-\alphb_1(\bx_1,\bxc{1}), \ldots, -\alphb_N(\bx_N,\bxc{N}) \} \). Thus, \( -\betab(\bx) \) is radially unbounded if and only if at least one of the functions \( -\alphb_i(\bx_i,\bxc{i}), i = 1, \ldots, N \), tends to \( +\infty \) along any trajectory where \( \|\bx\| \to +\infty \). Furthermore, from \eqref{eq:Consolidated_Const_Fun}, we have \( -\alphb_i(\bx_i,\bxc{i}) = \max \{-\psi_{i,1}(\bx_i,\bxc{i}), \ldots, -\psi_{i,m_i}(\bx_i,\bxc{i}) \} \). Therefore, \( -\alphb_i(\bx_i,\bxc{i}) \to +\infty \) if and only if at least one of \( -\psi_{i,k}(\bx_i,\bxc{i}), k = 1, \ldots, m_i \), tends to \( +\infty \). Consequently, \( -\betab(\bx) \) is radially unbounded if and only if at least one  \( \psi_{i,k}(\bx_i,\bxc{i}), k = 1, \ldots, m_i, i = 1, \ldots, N \), tends to \( -\infty \) along any trajectory where \( \|\bx\| \to +\infty \).	
\end{proof}

In the next subsection, we discuss how optimization problem \eqref{arg_centeral_opt} can be approximately solved in a distributed way, allowing agents to determine their optimal positions \(\bxs_i, i=1,\ldots,N\), using only local information and without sharing their constraint functions.

\subsection{Multi-Agent Coordination via Distributed Optimization}
\label{Subsec:distOpt_scheme}

We propose a DO scheme to approximate the centralized nonsmooth problem in \eqref{arg_centeral_opt}. The key is leveraging the Log-Sum-Exp (LSE) function, which delivers a smooth under-approximation of the \(\min\) operators in \eqref{eq:Consolidated_Const_Fun} and \eqref{eq:MAS_consolidated_const}.

For each agent, we define:
\begin{subequations}\label{smooth_alph}
	\begin{align}
		\alpha_i(\bx_i,\bxc{i}) &\coloneqq -\frac{1}{\nualp} \ln \Big( \sum_{k=1}^{m_i} e^{- \nualp \, \psi_{i,k}(\bx_i,\bxc{i})} \Big) \label{smooth_alph_def} \\
		&\leq \alphb_i(\bx_i,\bxc{i}) \label{smooth_alph_underapp} \\
		&\leq \alpha_i(\bx_i,\bxc{i}) + \frac{1}{\nualp} \ln(m_i), \label{smooth_alph_ineq}
	\end{align}
\end{subequations}
where \(\alpha_i(\bx_i,\bxc{i})\) provides a smooth under-approximation of agent \(i\)'s consolidated constraint function \(\alphb_i(\bx_i,\bxc{i})\) in \eqref{eq:Consolidated_Const_Fun}, and \(\nualp > 0\) is a tuning parameter. As \(\nualp\) increases, the approximation improves, i.e., \(\alpha_i(\bx_i,\bxc{i}) \rightarrow \alphb_i(\bx_i,\bxc{i})\) as \(\nualp \rightarrow \infty\). From \eqref{smooth_alph}, it follows that \(\alpha_i(\bx_i,\bxc{i}) > 0\) is sufficient for \(\alphb_i(\bx_i,\bxc{i}) > 0\). Thus, when \(\alpha_i(\bx_i,\bxc{i}) > 0\), \(\bx_i\) and \(\bxc{i}\) ensure the satisfaction of agent \(i\)'s constraints.

We now define:
\begin{subequations}\label{eq:betabar_underapp}
	\begin{align}
		\betab_{\mathrm{ua}}(\bx) &\coloneqq \min \{ \alpha_1(\bx_1,\bxc{1}), \ldots , \alpha_N(\bx_N,\bxc{N}) \} \label{eq:Collective_Const_fun_underapp_nonsmooth} \\
		&\leq \betab(\bx) \\
		&\leq \min \{ \alpha_1(\bx_1,\bxc{1}) + \tfrac{1}{\nualp} \ln(m_1), \ldots \nonumber \\ 
		&\quad \quad \quad \; \; ,\alpha_N(\bx_N,\bxc{N}) + \tfrac{1}{\nualp} \ln(m_N) \} \\
		& \leq \betab_{\mathrm{ua}}(\bx) + \frac{1}{\nualp} \ln(\bar{m}),
	\end{align}
\end{subequations}
where $\bar{m} \coloneqq \max \{m_1, \ldots, m_N\}$. 
To introduce a smooth under-approximation of \eqref{eq:MAS_consolidated_const}, we replace the $\min$ operator in \eqref{eq:Collective_Const_fun_underapp_nonsmooth} with the LSE function, yielding:
\begin{subequations}\label{eq:beta_smooth_fun}
	\begin{align}
		\beta(\bx) &\coloneqq -\frac{1}{\nubet} \ln \Big( \sum_{i=1}^{N} e^{ - \nubet \alpha_i(\bx_i,\bxc{i}) } \Big) \label{eq:beta} \\
		&\leq \betab_{\mathrm{ua}}(\bx) \leq \beta(\bx) + \frac{1}{\nubet} \ln(N),
	\end{align}
\end{subequations}
where $\nubet > 0$ is a tuning parameter such that $\beta(\bx) \rightarrow \betab_{\mathrm{ua}}(\bx)$ as $\nubet \rightarrow \infty$. From \eqref{eq:betabar_underapp} and \eqref{eq:beta_smooth_fun}, we have:
\begin{equation} \label{eq:smooth under app_bounds}
	\beta(\bx) \leq \betab(\bx) \leq \beta(\bx) + \frac{1}{\nubet} \ln(N) + \frac{1}{\nualp} \ln(\bar{m}).
\end{equation}
Thus, $\beta(\bx)$ in \eqref{eq:beta} serves as a smooth under-approximation of the MAS's global constraint function $\betab(\bx)$ in \eqref{eq:MAS_consolidated_const}, and its accuracy can be improved by tuning $\nubet > 0$ and $\nualp > 0$. 

From \eqref{eq:smooth under app_bounds} and Lemma \ref{lem:radially_minus_beta} one can conclude that $-\beta(\mathdot)$ is radially unbounded and thus $\beta(\mathdot)$ has compact level sets. As a result, in lieu of \cite[Proposition 2.10]{Grippo2023intro}, we define:
\begin{subequations} \label{eq:beta_max_bounds}
	\begin{align}
		\betastr &\coloneqq \max_{\bx} \beta(\bx) \label{eq:beta_max} \\
		&\leq \betabstr \leq \betastr + \frac{1}{\nubet} \ln(N) + \frac{1}{\nualp} \ln(\bar{m}).
	\end{align}
\end{subequations}
From \eqref{eq:beta_max}, it is evident that $\betastr > 0$ ensures the feasibility of MAS constraints. Conversely, $\betastr \leq - \frac{1}{\nubet} \ln(N) - \frac{1}{\nualp} \ln(\bar{m})$ is sufficient to indicate the infeasibility of MAS constraints.

From \eqref{eq:beta_max_bounds}, we can verify that the centralized optimization problem in \eqref{eq:MAS_CollectiveConst_MAX} can be approximately solved by maximizing $\beta(\bx)$. Consequently, the optimal agent positions (formation) for collaboratively satisfying the spatial multi-agent constraints can be approximated by solving:
\begin{equation} \label{arg_opt_smooth}
	\bxsh \coloneqq \arg\max \beta(\bx),
\end{equation}
where $\bxsh$ is the approximated optimal multi-agent formation such that $\|\bxs - \bxsh\| \leq \epsilon$. From \eqref{eq:beta_max_bounds}, it is guaranteed that $\epsilon \rightarrow 0$ as $\nualp, \nubet \rightarrow +\infty$, since $\betastr \rightarrow \betabstr$. Although smaller values of $\nualp$ and $\nubet$ increase the gap between $\betastr$ and $\betabstr$, this does not necessarily imply that $\epsilon \geq 0$ will grow. In fact, $\epsilon$ may remain small even for low values of $\nualp$ and $\nubet$. In other words, the distance between the optimal points $\bxs$ and $\bxsh$ may remain small even when $\betabstr$ and $\betastr$ differ substantially.

Apart from ensuring the smoothness of the optimization problems in \eqref{eq:beta_max} and \eqref{arg_opt_smooth}, $\beta(\bx)$ also facilitates solving \eqref{arg_opt_smooth} in a distributed setting. To see this, observe that since $\ln(\cdot)$ is strictly increasing and $\sum_{i=1}^{N} e^{- \nubet \alpha_i(\bx_i,\bxc{i})}$ is strictly positive, we have:
\begin{align}\label{eq:argmax_distir_opt}
	\bxsh &= \arg\max -\frac{1}{\nubet} \ln \Big( \sum_{i=1}^{N} e^{- \nubet \alpha_i(\bx_i,\bxc{i})} \Big) \nonumber \\
	&= \arg\min \sum_{i=1}^{N} e^{- \nubet \alpha_i(\bx_i,\bxc{i})}.
\end{align}

Now define 
\begin{equation} \label{eq:h_fun}
	h_i(\bx_i, \bxc{i}) \coloneqq  \sum_{k = 1}^{m_i} e^{-\nualp \psi_{i,k}(\bx_i, \bxc{i})}, \quad i = 1, \ldots, N. 
\end{equation}	
Substituting \eqref{smooth_alph_def} into \eqref{eq:argmax_distir_opt} and using \eqref{eq:h_fun} yields: 
\begin{equation} \label{eq:intermed_cost}
		\bxsh = \arg\min \sum_{i=1}^{N} h_i(\bx_i, \bxc{i})^{\frac{\nubet}{\nualp}}.
\end{equation}
Consequently, defining
\begin{equation}  \label{eq:local_cost}
	f_i(\bx_i,\bxc{i}) \coloneqq h_i(\bx_i, \bxc{i})^{\frac{\nubet}{\nualp}}, \quad i = 1, \ldots, N,
\end{equation}
leads to
\begin{equation}\label{eq:f_fun}
	f(\bx) \coloneqq \sum_{i = 1}^{N} f_i(\bx_i,\bxc{i}),
\end{equation} 
representing the global objective function of the MAS, which is composed of the sum of agents' local objective functions $f_i(\bx_i,\bxc{i})$. As a result, $\bxsh$ is the minimizer of both \eqref{arg_opt_smooth} and the following optimization problem:
\begin{equation}\label{eq:distri_opt_problem}
	\min_{\bx} f(\bx) = \sum_{i = 1}^{N} f_i(\bx_i,\bxc{i}),
\end{equation}
which aligns with DO setups \cite{gharesifard2013distributed,yang2019survey}.

Note that each agent's objective function $f_i(\bx_i,\bxc{i})$ depends solely on its own constraint functions $\psi_{i,k}(\bx_i,\bxc{i})$, for $k = 1, \ldots, m_i$. As a result, based on the constraints in \eqref{eq:agent_const_predicateform}, a private local objective function, defined in \eqref{eq:local_cost}, can be formulated for each agent. By solving the smooth unconstrained DO problem \eqref{eq:distri_opt_problem}, agents can collaboratively reach a formation that meets both individual and coupled constraints in MAS as much as possible. In particular, the feasibility of the MAS constraints under the multi-agent formation \(\bxsh\) obtained from the above optimization problem is determined by checking whether \(\betab(\bxsh) > 0\) (feasible) or \(\betab(\bxsh) \leq 0\) (infeasible).

\begin{remark} \label{rem:noncooperative_game}
	Each agent's consolidated constraint function $\alphb_i(\bx_i,\bxc{i})$ in \eqref{eq:Consolidated_Const_Fun}, or its smooth under-approximation $\alpha_i(\bx_i,\bxc{i})$ in \eqref{smooth_alph_def}, can be interpreted as a local benefit function. By maximizing it, each agent seeks to satisfy its individual and coupled constraints as much as possible in an egoistic manner. Since each agent's benefit function depends on the positions of other agents, this egoistic optimization can be analyzed through noncooperative game theory \cite{hespanha2017noncooperative}, potentially leading to Nash equilibria. However, this approach does not always ensure the satisfaction of feasible multi-agent spatial constraints. The DO problem in \eqref{eq:distri_opt_problem} resolves this by collaboratively maximizing the smallest $\alphb_i(\bx_i,\bxc{i})$ across the MAS, as shown in \eqref{eq:MAS_consolidated_const} and \eqref{eq:MAS_CollectiveConst_MAX}.
\end{remark}

\begin{remark}
	We note that our DO problem in \eqref{eq:distri_opt_problem} closely resembles a Continuous Distributed Constraint Optimization Problem (C-DCOP)\footnote{In DCOP literature, a “constraint” denotes cost functions that determine how agents select decision variable values and should not be confused with constraints in the optimization literature.} in MAS \cite{fransman2023distributed,hoang2020new,fioretto2018distributed}. In C-DCOPs, it is typically assumed that each agent controls a decision variable and that the edges of an undirected constraint graph model shared objective functions among neighboring agents, with the goal of minimizing the sum of these functions over all edges. Unlike C-DCOPs, our approach does not require neighboring agents to share a common objective function. Instead, each agent $i$ has its own private local objective function that may depend on the decision variables of some other agents (i.e., neighbors in the directed task dependency graph \(\Gtask\)), yet $f_i$ remains unknown to them.
\end{remark}

\subsection{Properties of the Multi-Agent Objective Function \eqref{eq:f_fun}}
\label{Subsect:costfun_prop}

To identify suitable algorithms for solving the DO problem \eqref{eq:distri_opt_problem}, it is crucial to examine the properties of the global multi-agent objective function $f(\bx)$. The following lemma provides a sufficient condition for the log-convexity of $f(\bx)$ in \eqref{eq:f_fun}, which implies its convexity.
\begin{lemma} \label{lem:objective_fun_convexity}
	If all agents' constraint functions in \eqref{eq:agent_const_predicateform}, i.e., $\psi_{i,k}(\bx_i, \bxc{i})$, $k = 1, \ldots$, $m_i, i = 1, \ldots, N$ are concave, then the global objective function $f(\bx)$ in \eqref{eq:f_fun} is log-convex.
\end{lemma}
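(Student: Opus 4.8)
The plan is to build $f$ from the elementary exponentials $e^{-\nualp \psi_{i,k}}$ and repeatedly invoke \Cref{lem:log_cvx_perserve}. The starting observation is that, since $\nualp > 0$ and each $\psi_{i,k}$ is concave on $\R^d \times \R^{n_i d}$, the function $-\nualp\,\psi_{i,k}(\bx_i,\bxc{i})$ is convex; hence $e^{-\nualp\,\psi_{i,k}(\bx_i,\bxc{i})}$ is strictly positive and its logarithm (which is exactly $-\nualp\,\psi_{i,k}$) is convex, so by definition $e^{-\nualp\,\psi_{i,k}}$ is log-convex. Because the map $\bx \mapsto (\bx_i,\bxc{i})$ from $\R^{Nd}$ to $\R^d \times \R^{n_i d}$ is linear, composing $-\nualp\,\psi_{i,k}$ with it preserves convexity; therefore, viewed as a function of the full stacked vector $\bx$ on the convex domain $\R^{Nd}$, each $e^{-\nualp\,\psi_{i,k}(\bx_i,\bxc{i})}$ remains log-convex.

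Next I would assemble $h_i$ and $f_i$. Since log-convexity is preserved under addition (\Cref{lem:log_cvx_perserve}), the function $h_i(\bx_i,\bxc{i}) = \sum_{k=1}^{m_i} e^{-\nualp\,\psi_{i,k}(\bx_i,\bxc{i})}$ in \eqref{eq:h_fun} is log-convex as a function of $\bx$. Since log-convexity is preserved under positive powers (\Cref{lem:log_cvx_perserve}) and the exponent $\nubet/\nualp$ is positive, the local objective $f_i(\bx_i,\bxc{i}) = h_i(\bx_i,\bxc{i})^{\nubet/\nualp}$ in \eqref{eq:local_cost} is log-convex on $\R^{Nd}$. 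Finally, applying additive closure once more, $f(\bx) = \sum_{i=1}^{N} f_i(\bx_i,\bxc{i})$ in \eqref{eq:f_fun} is log-convex, and hence convex, as noted right after the definition of log-convex functions.

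I do not anticipate a real obstacle: the argument is a direct chain through the operations catalogued in \Cref{lem:log_cvx_perserve}. The only point that needs explicit care is the bookkeeping that each $f_i$ depends on just a subset of the coordinates of $\bx$, so one must first lift its log-convexity in $(\bx_i,\bxc{i})$ to log-convexity on all of $\R^{Nd}$ — via composition of $\log f_i$ with the affine coordinate-projection — before the final additive step is legitimate. (One should also make explicit, perhaps in a remark, that $\nualp,\nubet>0$ and the domains of the $\psi_{i,k}$ are all of $\R^{Nd}$, hence convex, so every hypothesis of \Cref{lem:log_cvx_perserve} is met.)
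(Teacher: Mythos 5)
Your proposal is correct and follows essentially the same route as the paper's proof: establish log-convexity of each $e^{-\nualp\psi_{i,k}}$ from the concavity of $\psi_{i,k}$, then propagate it through the sum defining $h_i$, the positive power defining $f_i$, and the final sum defining $f$, using the closure properties in Lemma~\ref{lem:log_cvx_perserve}. Your explicit remark about lifting log-convexity from $(\bx_i,\bxc{i})$ to the full stacked vector $\bx$ via the linear coordinate projection is a small point of added care that the paper leaves implicit, but it does not change the argument.
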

\begin{proof}
	Consider \( f(\bx) \coloneqq \sum_{i = 1}^{N} f_i(\bx_i, \bxc{i}) \) in \eqref{eq:f_fun}. The convexity of all local objective functions \( f_i(\bx_i, \bxc{i}) \) with respect to their arguments ensures the convexity of \( f(\bx) \). The functions \( e^{-\nualp \psi_{i,k}(\bx_i, \bxc{i})}, k = 1,\ldots,m_i, i = 1,\ldots,N \), are log-convex \cite[Section 3.5]{boyd2004convex} in their arguments when $\psi_{i,k}(\bx_i, \bxc{i})$ are concave w.r.t. their arguments. From Lemma \ref{lem:log_cvx_perserve} we know that the sum of log-convex functions remains log-convex, thus \( h_i(\bx_i,\bxc{i}) = \sum_{k = 1}^{m_i} e^{-\nualp \psi_{i,k}(\bx_i,\bxc{i})}, i = 1,\ldots,N \), are log-convex in their arguments. Furthermore, $f_i(\bx_i, \bxc{i}) = h_i(\bx_i,\bxc{i})^{\frac{\nubet}{\nualp}}$ with $\frac{\nubet}{\nualp} > 0$ is also log-convex since $\log (h_i(\bx_i,\bxc{i})^{\frac{\nubet}{\nualp}} ) = \frac{\nubet}{\nualp} \log \left(h_i(\bx_i,\bxc{i}) \right)$ is convex due to the log-convexity of $h_i(\bx_i,\bxc{i})$. Consequently, $f_i(\bx_i, \bxc{i}), i = 1,\ldots,N$, are log-convex in their arguments. Finally, as sum of log-convex functions is log-convex then $f(\bx)$ in \eqref{eq:f_fun} is log-convex.
\end{proof}

\begin{remark}\label{rem:psi constraints concavity}
	The concavity of the multi-agent constraint functions $\psi_{i,k}(\bx_i,\bxc{i})$ limits the types of constraints that can be considered for agents. For example, since any $p$-norm is convex, one can verify that continuously differentiable inter-agent connectivity constraints, i.e., $R^2 - \|\bx_i - \bx_j\|^2 > 0$, where $R > 0$ represents the maximum distance between agents $i$ and $j$, are allowed. However, collision avoidance constraints, i.e., $\|\bx_i - \bx_j\|^2 - r^2 > 0$, where $r > 0$ denotes the minimum distance between agents $i$ and $j$, do not meet this requirement and may result in a nonconvex $f(\bx)$ in \eqref{eq:f_fun}.
\end{remark}

The convexity of \( f(\bx) \) in \eqref{eq:f_fun} is crucial for many DO algorithms, but identifying conditions for strict convexity, which guarantees a unique minimizer, is equally important. This is addressed in the following lemma.

\begin{lemma} \label{lem:strict_cvx_global_obj_fun}
Let all $\psi_{i,k}(\bx_i, \bxc{i})$ in \eqref{eq:agent_const_predicateform} be concave functions. If at least one of the constraint functions \( \psi_{i,k}(\bx_i,\bxc{i}) \), $k = 1, \ldots, m_i$, for each agent is strictly concave, then the global objective function \( f(\bx) \) in \eqref{eq:f_fun} is strictly log-convex.
\end{lemma}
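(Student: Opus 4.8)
The plan is to build directly on the proof of Lemma~\ref{lem:objective_fun_convexity}: every ``log-convex'' conclusion obtained there stays valid, and I only need to upgrade one of them to \emph{strict} log-convexity in the right directions. The single new analytic input is the observation that if $\psi_{i,k}(\bx_i,\bxc{i})$ is strictly concave, then $-\nualp\psi_{i,k}$ is strictly convex, so $e^{-\nualp\psi_{i,k}}$ is strictly log-convex (its logarithm is exactly $-\nualp\psi_{i,k}$). To deduce that $f$ is strictly log-convex I would then show that $\log f$ is strictly convex along every line segment, i.e.\ that for all $\bx\neq\by$ in $\R^{Nd}$ and $\theta\in(0,1)$ one has $f\big(\theta\bx+(1-\theta)\by\big) < f(\bx)^{\theta}f(\by)^{1-\theta}$.

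Fix such $\bx\neq\by$ and $\theta$, and restrict all the relevant functions to the segment $[\bx,\by]$, viewed as functions of the segment parameter. Since $\bx\neq\by$, there is an agent $j$ with $\bx_j\neq\by_j$; by hypothesis agent $j$ has a strictly concave constraint $\psi_{j,k^\ast}$, and — being a constraint on agent $j$'s own position — it depends on $\bx_j$, so the segment direction has nonzero component in the variables on which $\psi_{j,k^\ast}$ depends. Hence the restriction of $\psi_{j,k^\ast}$ to $[\bx,\by]$ is strictly concave, so the restriction of $e^{-\nualp\psi_{j,k^\ast}}$ is strictly log-convex; adding the remaining (log-convex) summands of $h_j=\sum_{k}e^{-\nualp\psi_{j,k}}$ keeps it strictly log-convex, because a log-convex plus a strictly log-convex function is strictly log-convex (the Hölder argument behind Lemma~\ref{lem:log_cvx_perserve}, carried through with the strict inequality using Lemma~\ref{lem:holder_ineq}); and raising $h_j$ to the positive power $\nubet/\nualp$ preserves strict log-convexity since $\log f_j=(\nubet/\nualp)\log h_j$ with $\nubet/\nualp>0$. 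Every other term $f_i$, $i\neq j$, restricts to a log-convex function of the segment parameter by Lemma~\ref{lem:objective_fun_convexity}. Writing $f=f_j+\sum_{i\neq j}f_i$ and applying the same ``log-convex $+$ strictly log-convex $=$ strictly log-convex'' fact once more yields the desired strict inequality, hence $f$ is strictly log-convex.

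The main obstacle — and the reason this is not a one-line corollary of Lemma~\ref{lem:objective_fun_convexity} — is that strict log-convexity is destroyed under the generally non-injective coordinate projections $\bx\mapsto(\bx_i,\bxc{i})$: as a function of the \emph{full} vector $\bx$, each $f_i$ is only log-convex, never strictly log-convex, since it ignores the coordinates of agents outside $\{i\}\cup\I_i$. So one cannot simply say ``$f$ is a sum of strictly log-convex functions''; instead one must locate, for each pair $\bx\neq\by$, a single summand that is strict along that particular segment, and the agent $j$ with $\bx_j\neq\by_j$ always supplies one via its own strictly concave constraint. This is precisely where the hypothesis ``for each agent at least one $\psi_{i,k}$ is strictly concave'' is used in full strength, and it implicitly relies on each $\psi_{i,k}$ genuinely depending on $\bx_i$ (natural, since $\psi_{i,k}$ constrains agent $i$'s own position). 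The only remaining ingredients are the two elementary strengthenings of Lemma~\ref{lem:log_cvx_perserve} used above — ``(strictly log-convex) $+$ (log-convex) $=$ strictly log-convex'' and ``strict log-convexity is preserved under positive powers'' — both of which follow from the same Hölder estimate as in Lemma~\ref{lem:log_cvx_perserve}, now propagating the strict inequality.
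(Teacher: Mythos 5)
Your proposal is correct and follows essentially the same route as the paper's proof: both reduce the claim to showing that each $h_i$ (hence $f_i=h_i^{\nubet/\nualp}$) is strictly log-convex in the direction of agent $i$'s own coordinate $\bx_i$ — using the one strictly concave $\psi_{i,k'}$ to get the strict inequality and H\"older's inequality (Lemma~\ref{lem:holder_ineq}) to close the sum — and then observe that for any $\bx\neq\by$ some agent $j$ has $\bx_j\neq\by_j$, so that summand supplies strictness for $f$. Your explicit per-segment formulation is a slightly more careful rendering of the step the paper states as ``if each $f_i$ is strictly log-convex in $\bx_i$ then $f$ is strictly log-convex,'' but the substance is identical.
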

\begin{proof}
	Recall from the proof of Lemma \ref{lem:objective_fun_convexity} that all local objective functions \( f_i(\bx_i, \bxc{i}) \) are log-convex under the concavity of $\psi_{i,k}(\bx_i, \bxc{i})$, \( k = 1, \ldots, m_i, i = 1, \ldots, N \). 

	Note that the local functions \( f_i(\bx_i,\bxc{i}) \), \( i = 1, \ldots, N \), typically depend only on a subset of the components of \( \bx \). Thus, even if one function is strictly (log-) convex, this does not guarantee the strict (log-) convexity of \( f(\bx) = \sum_{i = 1}^{N} f_i(\bx_i, \bxc{i}) \). However, if each \( f_i(\bx_i, \bxc{i}) \) is strictly (log-) convex in \( \bx_i \), then \( f(\bx)\) is ensured to be strictly (log-) convex.
	
	All constraint functions of agent \( i \), \( \psi_{i,k}(\bx_i,\bxc{i}) \), \( k = 1, \ldots, m_i \), inherently depend on \( \bx_i \), even when \( \I_i = \emptyset \), meaning agent \( i \) only has individual constraints. If at least one function \( \psi_{i,k}(\bx_i, \bxc{i}) \) is strictly concave in (all of) its arguments, it is also strictly concave with respect to \( \bx_i \). Thus, we focus on establishing the strict (log-) convexity of each \( f_i(\bx_i, \bxc{i}) \) in \( \bx_i \), which is sufficient to ensure the strict (log-) convexity of the global objective function \( f(\bx) \). Hereafter, we omit \( \bxc{i} \) from the function arguments for notational simplicity when no ambiguity arises.
	
	Consider functions \( h_i(\bx_i, \mathdot) \), \( i = 1, \ldots, N \), as defined in \eqref{eq:h_fun}. For any \( 0 \leq \theta \leq 1 \), we have:
	\begin{equation}\label{eq:h_i_cvx_start}
		h_i(\theta \bx_i + (1-\theta) \by_i, \mathdot)  =  \sum_{k=1}^{m_i} e^{-\nualp \psi_{i,k}(\theta \bx_i + (1-\theta) \by_i, \mathdot)},
	\end{equation} 
	for all $i = 1, \ldots, N$. From the concavity of agent \( i \)'s constraint functions \( \psi_{i,k}(\bx_i,\bxc{i}) \), we know that for all \( k \in \{1, \ldots, m_i\} \), the following inequality holds:
	\begin{equation*}\label{eq:psi_i_concave}
		\psi_{i,k}(\theta \bx_i + (1-\theta) \by_i, \mathdot) \geq \theta \psi_{i,k}(\bx_i , \mathdot) + (1-\theta) \psi_{i,k}(\by_i, \mathdot),
	\end{equation*}
	which indicates the concavity of $\psi_{i,k}(\bx_i , \mathdot)$ functions in their first argument. Furthermore, by assumption agent \( i \) has at least one strictly concave constraint function. This guarantees the existence of at least one \( k^\prime \in \{1, \ldots, m_i\} \) for which the following strict inequality holds:
	\begin{equation*}\label{eq:psi_i__strictly_concave}
		\psi_{i,k^\prime}(\theta \bx_i + (1-\theta) \by_i, \mathdot) > \theta \psi_{i,k^\prime}(\bx_i , \mathdot) + (1-\theta) \psi_{i,k^\prime}(\by_i,\mathdot).
	\end{equation*}
	Since $e^{-\nualp \psi_{i,k}(\theta \bx_i + (1-\theta) \by_i, \mathdot)}, k = 1, \ldots, m_i$, are strictly decreasing, from the above inequalities and \eqref{eq:h_i_cvx_start} we get:
	\begin{align}\label{eq:h_i_cvx_second}
		&\resizebox{.99\hsize}{!}{$ h_i(\theta \bx_i + (1-\theta) \by_i, \mathdot) < \sum_{k=1}^{m_i} e^{-\nualp \theta \psi_{i,k}(\bx_i , \mathdot)} \, e^{-\nualp  (1-\theta) \psi_{i,k}(\by_i, \mathdot)} $} \nonumber \\
		&\leq \Big( \sum_{k=1}^{m_i} e^{-\nualp \psi_{i,k}(\bx_i , \mathdot)} \Big)^\theta \, \Big( \sum_{k=1}^{m_i} e^{-\nualp \psi_{i,k}(\by_i, \mathdot)} \Big)^{1-\theta} \nonumber \\
		&= h_i(\bx_i, \mathdot)^\theta \, h_i(\by_i, \mathdot)^{1-\theta}, \quad i = 1, \ldots, N, 
	\end{align}
	where, Lemma \ref{lem:holder_ineq} (Hölder's inequality) with \( 1/p = \theta \) is applied to obtain the final inequality. As evident from \eqref{eq:h_i_cvx_second}, the functions \( h_i(\bx_i, \mathdot) \) for \( i = 1, \ldots, N \), are strictly log-convex in their first arguments. Additionally, since \( h_i(\bx_i, \mathdot) \), \( i = 1, \ldots, N \), are strictly positive, \eqref{eq:h_i_cvx_second} and \eqref{eq:local_cost} lead to:
	\begin{align}\label{eq:f_i_strict_cvx}
		f_i(\theta \bx_i + (1-\theta) \by_i, \mathdot) &= \left( h_i(\theta \bx_i + (1-\theta) \by_i, \mathdot) \right) ^{\frac{\nubet}{\nualp}} \nonumber \\
		&< h_i(\bx_i, \mathdot)^{\frac{\nubet}{\nualp}\theta} \, h_i(\by_i, \mathdot)^{\frac{\nubet}{\nualp} (1-\theta)} \nonumber \\
		&= f_i(\bx_i, \mathdot)^{\theta} \, f_i(\by_i, \mathdot)^{1-\theta},  
	\end{align}
	for all $i = 1, \ldots, N$, establishing the strict log-convexity of all \( f_i(\bx_i, \mathdot) \) with respect to their first argument \cite{boyd2004convex, constantin2018convex}. As a result, we can conclude that \( f(\bx) \coloneqq \sum_{i = 1}^{N} f_i(\bx_i, \bxc{i}) \) is also strictly log-convex and thus strictly convex.
\end{proof}

\begin{remark} \label{rem:strict_cvx_not_restrictive}
	We emphasize that ensuring the strict (log-) convexity of the global objective function \( f(\bx) \) is not restrictive in practice. When all multi-agent constraint functions are concave but do not meet the additional condition in Lemma \ref{lem:strict_cvx_global_obj_fun}, an auxiliary individual constraint can be added for each agent to guarantee the strict (log-) convexity of \( f(\bx) \). Specifically, one can introduce strictly concave constraints \( \psi_{i, \text{aux}}(\bx_i) \coloneqq c_{\mathrm{aux}} - \|\bx_i\|^2 > 0, \, i = 1, \ldots, N \), with a sufficiently large \( c_{\mathrm{aux}} > 0 \). As stated in Remark \ref{rem:aux_const}, these constraints do not interfere with the original MAS constraints.
\end{remark}

The closed-form gradient of agent \( i \)'s local objective function \( f_i(\bx_i, \bxc{i}) \) defined in \eqref{eq:local_cost} with respect to \( \bx \) is derived as follows (function arguments omitted for simplicity):
\begin{align} \label{eq:gard_f_i}
	\nabla f_i = \frac{\partial f_i}{\partial \bx}^\top &= \frac{\nubet}{\nualp} \; h_i^{(\frac{\nubet}{\nualp}-1)} \; \frac{\partial h_i}{\partial \bx}^\top  \nonumber \\
	&= - \nubet \, h_i^{(\frac{\nubet}{\nualp}-1)} \, \sum_{k = 1}^{m_i} \frac{\partial \psi_{i,k}}{\partial \bx}^\top e^{-\nualp \psi_{i,k}}, 
\end{align} 
for all $i = 1 , \ldots, N$. Finally, it is not difficult to conclude that $f(\bx)$ and $\nabla f(\bx) = \sum_{i = 1}^{N} \nabla f_i$ are locally Lipschitz.

\subsection{Distributed Optimization Algorithm and Control Design}
\label{Subsect:dist_opt_alg_ctrl}

First, note that, with a slight abuse of notation, one can write \(f_i(\bx_i, \bxc{i})\) in \eqref{eq:distri_opt_problem} as \(f_i(\bx)\). Let \(\xest{j}{i}\) denote agent \(i\)'s local estimate of \(\bx_j\). Moreover, let $\xestvec{i} \coloneqq \col(\xest{j}{i})_{j = 1, \ldots, N} \in \R^{Nd},$ represent agent \(i\)'s local estimate of \(\bx\). Without loss of generality, we assume that \(\xest{i}{i}(0) = \bx_i(0)\) for all \(i = 1, \ldots, N\), that is, each agent initializes its position estimate with its true initial position.

To solve the unconstrained optimization problem \eqref{eq:distri_opt_problem} in a distributed manner, one can reformulate it as an equivalent consensus-based optimization problem, as shown in \cite[Lemma 3.1]{gharesifard2013distributed}, as follows:
\begin{flalign} \label{eq:dist_opt_reformu_standard}
	&\text{minimize} \; \ftil(\bxtil) = \sum_{i = 1}^{N} f_i(\xestvec{i}), \; \text{subject to} \;  \bL \bxtil = \mathbf{0}_{N^2 d}, \!\!\!\!\!\!\!&
\end{flalign}
where $\bxtil = \col(\xestvec{i})_{i=1,\ldots, N} \in \R^{N^2 d}$ represents the stacked vector of all agents' estimates, and $\bL \coloneqq \Lapla \otimes \mbI_{N d} \in \R^{N^2d \times N^2d}$, where $\Lapla \in \R^{N \times N}$ is the Laplacian matrix \cite{mesbahi2010graph} corresponding to the undirected, connected inter-agent communication graph $\Gcom$. Notice that $\bL \bxtil = \mathbf{0}_{N^2 d}$ holds when $\xestvec{i} = \xestvec{j}$ for all $i, j \in \V$. In other words, the optimal solution of \eqref{eq:dist_opt_reformu_standard} must satisfy $\bxtils = \mathbf{1}_N \otimes \bxsh$ for all $\bxsh \in \R^{N d}$.

For a strictly convex function $\ftil(\bxtil)$, the solution to the consensus-based DO problem \eqref{eq:dist_opt_reformu_standard} can be derived using the continuous-time distributed algorithm proposed in \cite{kia2015distributed}:
\begin{subequations} \label{eq:dist_opt_protocol_compact}
	\begin{align}
		\bxtildot &= - k_1 \, \nabla_{\bxtil} \ftil(\bxtil) - k_2 \, \bL \bxtil - \bz, \\
		\dot{\bz} &= k_1\, k_2 \, \bL \bxtil,
	\end{align}
\end{subequations}
with $\sum_{i=1}^{N} \bz_i(0) = \mathbf{0}_{N d}$, where $\bz \coloneqq [\bz_1^\top, \ldots, \bz_N^\top]^\top \in \R^{N^2 d}$ and $k_1, k_2 > 0$ are some positive gains. Note that \eqref{eq:dist_opt_protocol_compact} leads to the following update law for each agent: 
\begin{subequations} \label{eq:dist_opt_protocol}
	\begin{align}
		\xestvecdot{i} &= - k_1 \nabla f_i(\xestvec{i}) - k_2  \sum_{j \in \Ncom_i}^{} (\xestvec{i} - \xestvec{j}) - \bz_i, \\
		\dot{\bz}_i &= k_1 k_2 \sum_{j \in \Ncom_i}^{} (\xestvec{i} - \xestvec{j}). 
	\end{align}
\end{subequations}

To implement the above distributed protocol, each agent must compute the gradient of its local objective function and share its solution estimate \(\xestvec{i}\) with neighbors in the communication graph \(\Gcom(\V, \Ecom)\). By \eqref{eq:dist_opt_protocol} and the initial condition \(\xest{i}{i}(0) = \bx_i(0)\), each agent's controller is given by:
\begin{equation} \label{eq:kinematic_control}
	\bu_i = (\mbe_N^{i} \otimes \mbI_d)^\top  \xestvecdot{i}.
\end{equation}

\begin{theorem} \label{th:main_cvx_only}
	Consider the multi-agent system in \eqref{eq:agent_single_integ_dyn} with continuously differentiable long-term spatial constraints in \eqref{eq:agent_const_predicateform}. Under Assumptions 1-3 and Lemma \ref{lem:strict_cvx_global_obj_fun}, for any $\xestvec{i}(0)$ with $\xestvec{i}_i(0) = \bx_i(0)$ and $\bz_i(0) = \mathbf{0}_{N^2 d}$, for all  $i = 1, \ldots, N$, the distributed control protocol \eqref{eq:dist_opt_protocol} and \eqref{eq:kinematic_control} ensures that all agents asymptotically converge to their optimal positions, i.e., $\bx(t) \rightarrow \bxsh$ as $t \rightarrow \infty$.
\end{theorem}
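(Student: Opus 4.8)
The plan is to reduce the statement to the convergence guarantee of the continuous-time distributed optimization algorithm of \cite{kia2015distributed}, applied to the consensus-reformulated problem \eqref{eq:dist_opt_reformu_standard}, and then to transfer the resulting convergence of the agents' estimates back to their physical positions using the controller definition \eqref{eq:kinematic_control}.

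First I would verify that all hypotheses under which \eqref{eq:dist_opt_protocol_compact} is known to converge are in force here. By Assumption \ref{assu:one_maximal_cluster} the task dependency graph is a single cluster spanning all $N$ agents, so Assumption \ref{assu:connceted_commu_graph} makes $\Gcom$ undirected and connected; its Laplacian $\Lapla$ then has a simple zero eigenvalue with eigenvector $\mathbf{1}_N$, whence $\bL\bxtil = \mathbf{0}$ iff $\bxtil = \mathbf{1}_N\otimes\bx$ for some $\bx\in\R^{Nd}$, i.e., the feasible set of \eqref{eq:dist_opt_reformu_standard} is exactly the consensus subspace. Each $f_i$ is convex and, under the hypothesis of Lemma \ref{lem:strict_cvx_global_obj_fun}, strictly convex in its first argument $\bx_i$, so $f(\bx)=\sum_{i=1}^N f_i(\bx_i,\bxc{i})$ is strictly convex on $\R^{Nd}$, and $\nabla f$ is locally Lipschitz as noted after \eqref{eq:gard_f_i}. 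Assumption \ref{assu:compact_level_set_beta_bar}, together with Lemma \ref{lem:radially_minus_beta} and \eqref{eq:smooth under app_bounds}, makes $f$ coercive (compact sublevel sets), so the minimizer $\bxsh$ in \eqref{arg_opt_smooth}/\eqref{eq:distri_opt_problem} exists and, by strict convexity, is unique; since $\ftil(\mathbf{1}_N\otimes\bx) = f(\bx)$, problem \eqref{eq:dist_opt_reformu_standard} has the unique optimizer $\bxtils = \mathbf{1}_N\otimes\bxsh$. The prescribed initialization $\sum_{i=1}^N\bz_i(0)=\mathbf{0}_{Nd}$ is exactly the one required by the algorithm (and it is conserved: $\sum_i\bz_i(t)\equiv\mathbf{0}$ because $\Lapla$ is symmetric). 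Invoking the convergence theorem of \cite{kia2015distributed} then yields that, along \eqref{eq:dist_opt_protocol} (the per-agent form of \eqref{eq:dist_opt_protocol_compact}), $\bxtil(t)\to\mathbf{1}_N\otimes\bxsh$, i.e., $\xestvec{i}(t)\to\bxsh$ for every $i$, as $t\to\infty$.

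Next I would connect this to the true positions. By \eqref{eq:kinematic_control}, $\bu_i = (\mbe_N^{i}\otimes\mbI_d)^\top\xestvecdot{i}$ is the time derivative of agent $i$'s estimate of its own state, and since $\xest{i}{i}(0)=\bx_i(0)$, integrating $\dot{\bx}_i=\bu_i$ yields $\bx_i(t)=\xest{i}{i}(t)$ for all $t\ge 0$ and all $i$. Stacking, $\bx(t)$ is the subvector $\col(\xest{i}{i}(t))_{i=1,\ldots,N}$ of $\bxtil(t)$, hence $\bx(t)\to\col(\bxsh_i)_{i=1,\ldots,N}=\bxsh$ as $t\to\infty$, which is the claim.

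I expect the main obstacle to be the rigorous invocation of \cite{kia2015distributed} under the available regularity: that result is typically stated for globally Lipschitz $\nabla f$ (or otherwise presumes bounded trajectories), whereas here $\nabla f$ is only locally Lipschitz. To bridge this I would first show that the trajectories of \eqref{eq:dist_opt_protocol_compact} stay bounded --- using the Lyapunov function from \cite{kia2015distributed}, the coercivity of $f$ to confine $\bxtil$ to a compact sublevel set, and the conserved quantity $\sum_i\bz_i(t)\equiv\mathbf{0}$ --- and then restrict to that compact set, on which $\nabla f$ is Lipschitz, before applying a LaSalle-type argument to conclude convergence to the unique equilibrium $(\mathbf{1}_N\otimes\bxsh,\bz^\ast)$ compatible with the initialization. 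The graph-theoretic characterization of the consensus subspace and the trajectory identity $\bx_i\equiv\xest{i}{i}$ are routine.
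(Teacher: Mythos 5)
Your proposal takes essentially the same route as the paper's proof: verify that $\Gcom$ is undirected and connected (Assumptions \ref{assu:connceted_commu_graph}--\ref{assu:one_maximal_cluster}), that $f$ has compact level sets (Assumption \ref{assu:compact_level_set_beta_bar} via Lemma~\ref{lem:radially_minus_beta}) and is strictly convex (Lemma~\ref{lem:strict_cvx_global_obj_fun}), invoke the convergence result of \cite{kia2015distributed} for \eqref{eq:dist_opt_protocol}, and transfer $\xestvec{i}(t)\to\bxsh$ to the physical positions through \eqref{eq:kinematic_control} using $\bx_i(t)=\xest{i}{i}(t)$. The paper's proof is essentially this citation in compressed form, so your extra care about the local-versus-global Lipschitz hypothesis, the conserved quantity $\sum_i\bz_i(t)$, and the uniqueness of the optimizer on the consensus subspace only adds rigor without changing the argument.
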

\begin{proof}
	The multi-agent communication graph $\Gcom$ is assumed to be undirected and connected, as per Assumptions \ref{assu:connceted_commu_graph} and \ref{assu:one_maximal_cluster}. From Assumption \ref{assu:compact_level_set_beta_bar}, the compactness of level sets for $f(\bx)$ in \eqref{eq:f_fun}, and thus for $\ftil(\bxtil)$ in \eqref{eq:dist_opt_reformu_standard}, is inferred. Additionally, Lemma \ref{lem:strict_cvx_global_obj_fun} guarantees the strict convexity of $f(\bx)$, and hence the strict convexity of $\ftil(\bxtil)$. As a result, asymptotic convergence $\xestvec{i}(t) \rightarrow \bxsh$ for all $i = 1, \ldots, N$ is ensured, following \cite[Theorem 8]{kia2015distributed}. Consequently, owing to $\xestvec{i}_i(0) = \bx_i(0)$,  applying \eqref{eq:kinematic_control} to each agent ensures $\bx(t) \rightarrow \bxsh$ as $t \rightarrow \infty$.		
\end{proof}

\begin{remark}
		Ensuring strict convexity of $f(\bx)$ is straightforward (see Lemma \ref{lem:strict_cvx_global_obj_fun} and Remark \ref{rem:strict_cvx_not_restrictive}). However, for a convex $f(\bx)$, one can use the proportional-integral (also known as primal-dual \cite{wang2011control,jakovetic2020primal}) continuous-time DO algorithm from \cite{gharesifard2013distributed}, instead of \eqref{eq:dist_opt_protocol}. This method, unlike \eqref{eq:dist_opt_protocol}, requires agents to exchange both solution estimate vectors and dual variable vectors through the communication network, making it more resource-intensive. More precisely, the augmented Lagrangian associated with the equality-constrained optimization problem \eqref{eq:dist_opt_reformu_standard} can be written as $F(\bxtil,\bz) = \ftil(\bxtil) + \bxtil^\top L \, \bz + \frac{1}{2} \bxtil^\top L \, \bxtil$, where $\bz \coloneqq [\bz_1^\top, \ldots, \bz_N^\top]^\top \in \R^{N^2 d}$ stacks all dual variables, and $\bz_i \in \R^{Nd}$ denotes the dual vector maintained by agent $i$. The solution to the consensus-based distributed optimization problem \eqref{eq:dist_opt_reformu_standard} can then be obtained via the continuous-time saddle-point flow of \( F(\bxtil,\bz) \) \cite{gharesifard2013distributed}:
	\begin{subequations} \label{eq:saddle_point_flow_dyn}
		\begin{align*}
			\dot{\bxtil} &= - \nabla_{\bxtil}\, F(\bxtil,\bz) = -\nabla \ftil(\bxtil) - L \bz - L \bxtil, \\
			\dot{\bz} &= \nabla_{\bz}\, F(\bxtil,\bz) = L \bxtil.
		\end{align*}
	\end{subequations}
	This yields the following proportional–integral (PI) continuous-time distributed optimization protocol for each agent:
	\begin{subequations} \label{eq:dist_opt_protocol}
		\begin{align*}
			\xestvecdot{i} &= - \nabla f_i(\xestvec{i}) - \sum_{j \in \Ncom_i}^{} (\bz_i -\bz_j) - \sum_{j \in \Ncom_i}^{} (\xestvec{i} - \xestvec{j}), \\
			\dot{\bz}_i &= \sum_{j \in \Ncom_i}^{} (\xestvec{i} - \xestvec{j}). 
		\end{align*}
	\end{subequations}
	Observe that this protocol requires each agent to communicate both its local solution estimate \(\xestvec{i}\) and its local dual variables $\bz_i \in \R^{Nd}$ to its neighbors. 
\end{remark}

\begin{remark}
	Note that although Theorem \ref{th:main_cvx_only} guarantees only the asymptotic convergence of agents' positions \(\bx(t)\) to their optimal positions \(\bxsh\), the multi-agent constraints may be satisfied at a finite, though unknown, time once \(\betab(\bx(t)) > 0\). Equivalently, it suffices to have $\beta(\bx(t)) > 0$, since this implies \(\betab(\bx(t)) > 0\). Hence, in practice, when the MAS constraints are feasible, exact convergence of $\bx(t)$ to $\bxsh$ is not required, and one may terminate the algorithm in \eqref{eq:dist_opt_protocol} in finite time before full convergence.
\end{remark}

\subsection{Avoiding Potential Numerical Issues in Implementing \eqref{eq:dist_opt_protocol}}
\label{Subsect:numerical}

From \eqref{eq:h_fun} and \eqref{eq:local_cost}, it is evident that each agent's local objective function is a sum of exponential terms involving $\psi_{i,k}$ functions. Specifically, in any open subset of $\R^{(n_i+1)d}$ where agent $i$'s constraints are violated (i.e., $\psi_{i,k}(\bx_i, \bxc{i}) < 0$), the local objective function $f_i(\bx_i,\bxc{i})$ grows exponentially towards infinity. This behavior also applies to $\nabla f_i$, as seen in \eqref{eq:gard_f_i}. The increase in both $f_i(\bx_i,\bxc{i})$ and $\nabla f_i$ becomes more pronounced when $\frac{\nubet}{\nualp} > 1$. Consequently, the global objective function $f(\bx)$ in \eqref{eq:f_fun} also grows exponentially in regions where the multi-agent constraints are violated, i.e., when $\betab(\bx) < 0$ in \eqref{eq:MAS_consolidated_const}. Thus, while $f(\bx)$ and $f_i(\bx_i,\bxc{i})$, $i = 1, \ldots, N$, are well-defined over the entire spaces $\R^{N d}$ and $\R^{(n_i+1) d}$, as $\betab(\bx)$ decreases, $f(\bx)$ behaves similarly to a barrier function outside the bounded set \(\Omega \coloneqq \{\bx \mid \betab(\bx) > 0 \} \subset \R^{Nd}\). Therefore, if the initial values of $\xestvec{i}(0)$, $i = 1, \ldots, N$, in \eqref{eq:dist_opt_reformu_standard} are such that $\beta(\xestvec{i}(0)) \ll 0$, the proposed DO algorithm \eqref{eq:dist_opt_protocol} may face numerical overflow at $t = 0$. To mitigate this, we propose treating the tunable parameter $\nubet \in (0, +\infty)$ as a time-varying function. Specifically, based on the structure of \eqref{eq:local_cost}, initializing $\nubet(0)$ with a sufficiently small value (e.g., $\nubet(0) = 0.01$) and gradually increasing it to some sufficiently large finite value $\nubet^{\mathrm{final}} > \nubet(0)$ (e.g., $\nubet^{\mathrm{final}} > \ln(N)$) during the execution of \eqref{eq:dist_opt_reformu_standard} helps in preventing a potential numerical overflow at $t = 0$. Recall that even a small $\nubet^{\mathrm{final}}$ can still yield a good approximate solution $\bxsh$ to $\bxs$, as explained below \eqref{arg_opt_smooth}.

\section{Simulation Results}
\label{sec:simu_results}

Consider \eqref{eq:agent_single_integ_dyn} with $N = 3$ and $d = 2$ under the DO-based control scheme \eqref{eq:dist_opt_protocol} and \eqref{eq:kinematic_control}. In all subsequent simulations, we initialize $\bz_i(0) = \mathbf{0}_{N d}$ for all $i = 1, \ldots, N$, with the initial positions $\bx_i(0)$ and the values of $\xestvec{i}(0)$ randomly chosen. The random initialization of $\xestvec{i}(0)$ is constrained by $\xest{i}{i}(0) = \bx_i(0)$. Moreover, we set $k_1 = k_2 = 1$ in \eqref{eq:dist_opt_protocol}. As discussed in Subsection \ref{Subsect:numerical}, $\nubet$ is treated as a slowly increasing parameter for implementing \eqref{eq:dist_opt_protocol}, where
\[
\nubet(t) = \begin{cases}
	0.022 t + 0.01 & 0 \leq t \leq T, \\
	\nubet^{\mathrm{nom}} & t > T
\end{cases},
\]
with $\nubet^{\mathrm{nom}} = 5$ and $T \coloneqq (\nubet^{\mathrm{nom}} - 0.01)/0.022$. Additionally, in \eqref{eq:h_fun} and \eqref{eq:local_cost}, we set $\nualp = 5$. The agents' communication graph is a line graph with the undirected edge set $\Ecom = \{(1,2), (2,3)\}$, for which $\Lapla = [1, -1, 0; -1, 2, -1; 0, -1, 1]$.

\vspace{0.2cm}
\noindent \textit{\textbf{Case A:}} \textit{Feasible Spatial Constraints (Consensus)}

In the first simulation example, the agents' long-term constraints, as specified in \eqref{eq:agent_const_predicateform}, are defined as follows. For agent 1: \( \psi_{1,1}(\bx_1) = 1 - \|\bx_1 - [2,0]^\top\|^2 > 0 \). For agent 2: \( \psi_{2,1}(\bx_2, \bx_1) = 1 - \|\bx_2 - \bx_1\|^2 > 0 \). For agent 3: \( \psi_{3,1}(\bx_3, \bx_2) = 1 - \|\bx_3 - \bx_2\|^2  > 0\). These long-term constraints require agent 1 to end up within a ball of radius 1 centered at \([2,0]\), agent 2 to maintain a maximum distance of 1 from agent 1, and agent 3 to remain within 1 distance unit from agent 2. These conditions collectively guide the agents to an optimal formation, which is expected to occur when all converge at \([2,0]\). 

Fig.\ref{fig:simu_exam1_consensus} (top-left) illustrates the agents' final positions after 300 seconds, confirming that consensus at \([2,0]\) is the optimal configuration for satisfying the spatial multi-agent constraints. Fig.\ref{fig:simu_exam1_consensus} (top-right) shows the evolution of $\betab(\bx)$ in \eqref{eq:MAS_consolidated_const} and $\beta(\bx)$ in \eqref{eq:beta}, evaluated at agents' positions. Recall that the solution to \eqref{eq:dist_opt_reformu_standard} maximizes $\beta(\bx)$, which also approximately maximizes $\betab(\bx)$. Thus, the evolution of $\beta(\bx)$ or $\betab(\bx)$ can verify convergence to the optimal solution of \eqref{eq:dist_opt_reformu_standard}. However, these values are shown for verification only, as they are unavailable to the agents and unused in the control scheme in \eqref{eq:dist_opt_protocol} and \eqref{eq:kinematic_control}. The convergence of $\betab(\bx)$ to a positive constant ($\betab(\bxsh)\approx 1$) indicates the feasibility of the multi-agent constraints. Additionally, all constraints are satisfied from $t = 16.15$ onward, as $\betab(\bx)$ remains positive for all $t \geq 16.15$. Fig.\ref{fig:simu_exam1_consensus} (bottom-left) depicts the evolution of the entries of the vector $\bxtil \in \R^{18}$ in \eqref{eq:dist_opt_reformu_standard}, partitioned as agents estimation vectors $\xestvec{i} \in \R^{6}, i = 1, 2, 3$. Fig. \ref{fig:simu_exam1_consensus} (bottom-right) presents $\bz_i$ evolution for each agent. Each vector entry is shown in a different color, with corresponding entries across agents using the same color but a different line style. The results confirm that all $\xestvec{i}, i = 1, 2, 3$, have reached consensus on the optimal solution $\bxsh$ of \eqref{eq:dist_opt_reformu_standard}.  

\vspace{0.2cm}
\noindent \textit{\textbf{Case B:}} \textit{More General Feasible Spatial Constraints}

In this case, the long-term constraints of the agents are considered as follows. For agent 1: \( \psi_{1,1}(\bx_1) = 1 - \|\bx_1 - [2,0]^\top\|^2  > 0 \). For agent 2: \( \psi_{2,1}(\bx_2, \bx_1) = 3^2 - \|\bx_2 - \bx_1\|^2 > 0 \), and \( \psi_{2,2}(\bx_2, \bx_3) = 2^2 - \|\bx_2 - \bx_3\|^2 > 0 \). For agent 3: \( \psi_{3,1}(\bx_3) = 1 - \|\bx_3 - [-2,0]^\top\|^2  > 0\).

Fig.\ref{fig:simu_exam2_feasible} summarizes the simulation results. The evolution of $\betab(\bx)$ confirms that the multi-agent constraints are collectively feasible, with all agents' constraints being satisfied from approximately \(t = 13.48\). In this example, after 300 seconds the agents reach the approximate optimal configuration \(\bxsh = [(\bxsh_1)^\top, (\bxsh_2)^\top, (\bxsh_3)^\top ]^\top \in \R^6\), where \(\bxsh_1 = [1.99, 0.00]\), \(\bxsh_2 = [-0.61, 0.03]\), and \(\bxsh_3 = [-1.99, 0.00]\). Clearly, agents 1 and 3 tend to converge to their target locations at \([2,0]\) and \([-2,0]\), best satisfy their individual constraints, while agent 2 tends to converge to a point that best satisfies its coupled constraints with agents 1 and 3.

\vspace{0.2cm}
\noindent \textit{\textbf{Case C:}} \textit{Tightly Feasible Spatial Constraints}

In this example, the long-term constraints of agents 1 and 3 are kept the same as in \textit{Case B} while agent 2's constraints are modified as follows: \( \psi_{2,1}(\bx_2, \bx_1) = (1.7)^2 - \|\bx_2 - \bx_1\|^2 > 0 \), and \( \psi_{2,2}(\bx_2, \bx_3) = (0.7)^2 - \|\bx_2 - \bx_3\|^2 > 0 \), enforcing a tighter feasible space for agent 2 compared with \textit{Case B}. 

Fig. \ref{fig:simu_exam3_tightfeasible} presents the simulation results for this case. Although the optimal value of $\beta(\bx)$ is negative ($\beta(\bxsh) \approx -0.1$), its maximizer $\bxsh$ yields a positive value for $\betab(\bx)$ ($\betab(\bxsh) \approx 0.1$), confirming that the multi-agent constraints are collectively feasible. Moreover, all agents' constraints are satisfied for \(t \geq 62.21\). It is also worth noting that the steady-state value of $\betab(\bx)$ in this example is much smaller than the value found in \textit{Case B} (which was approximately 1), as shown in Fig. \ref{fig:simu_exam2_feasible} (top-right). This indicates that in this case, the multi-agent spatial constraints are tightly feasible, and thus closer to an infeasible scenario.

In this simulation, after 300 seconds, the agents reached the approximate optimal configuration: \(\bxsh_1 = [1.10, 0.00]\), \(\bxsh_2 = [-0.49, 0.00]\), and \(\bxsh_3 = [-1.10, 0.00]\). Notably, agents 1 and 3 approach the boundary of their constraints, deviating from their target positions to assist agent 2 in satisfying its coupled constraints. Recall that agents 1 and 3 do not have access to agent 2's local objective function nor its constraints, and this collaborative behavior emerges from solving the optimization problem \eqref{eq:distri_opt_problem} in a distributed manner, solely by sharing local estimates \(\xestvec{i}, i = 1, 2, 3\), via the communication network. Hence, this approach is useful in practice, as it allows agents to collaborate without knowing each other's spatial constraints and preserves privacy.

\vspace{0.2cm}
\noindent \textit{\textbf{Case D:}} \textit{Infeasible Spatial Constraints}

In this example, agents 1 and 3 constraints are the same as \textit{Case B} and agent 2's constraints are further shrunk to \( \psi_{2,1}(\bx_2, \bx_1) = (1.4)^2 - \|\bx_2 - \bx_1\|^2 > 0 \), and \( \psi_{2,2}(\bx_2, \bx_3) = (0.4)^2 - \|\bx_2 - \bx_3\|^2 > 0 \), leading to an infeasible scenario. The simulation results are shown in Fig. \ref{fig:simu_exam4_infeasible}. As depicted in Fig. \ref{fig:simu_exam4_infeasible} (right), the maximum value of $\beta(\bx)$ is negative ($\beta(\bxsh) \approx -0.35$), and its optimizer $\bxsh$ gives $\betab(\bxsh) \approx -0.18$, indicating that the multi-agent constraints are infeasible. Despite this, the agents achieve a least-violating spatial configuration. After 300 seconds, the agents reached the approximate optimal configuration: $\bxsh_1 = [0.98, 0.00]$, $\bxsh_2 = [-0.98, 0.00]$, and $\bxsh_3 = [-0.41, 0.00]$, with none of the agents fully satisfying their constraints. Although agents 1 and 3 slightly violated their constraints, their adjustments contributed to minimizing overall violations of the constraints at the group level.

\vspace{0.2cm}
\noindent \textit{\textbf{Case E:}} \textit{Nonconvex Global Objective Function (Rendezvous)}

We consider the same long-term constraints as in \textit{Case A}, with the addition of the following constraints for agents 2 and 3: \( \psi_{2,2}(\bx_2, \bx_1) = \|\bx_2 - \bx_1\|^2 - (0.2)^2 > 0 \), \( \psi_{3,2}(\bx_3, \bx_1) = 1 - \|\bx_3 - \bx_1\|^2 > 0 \), \( \psi_{3,3}(\bx_3, \bx_1) = \|\bx_3 - \bx_1\|^2 - (0.2)^2 > 0 \), and \( \psi_{3,4}(\bx_3, \bx_2) = \|\bx_3 - \bx_2\|^2 - (0.2)^2 > 0 \). Note that \(\psi_{2,2} > 0\), \(\psi_{3,3} > 0\), and \(\psi_{3,4} > 0\), ensure a minimum separation distance among agents. As stated in Remark \ref{rem:psi constraints concavity}, these constraints can lead to a nonconvex global multi-agent objective function in \eqref{eq:f_fun}. As shown in Fig. \ref{fig:simu_exam5_rendezvous}, the agents achieve a sub-optimal formation (due to the nonconvexity of the objective function) that satisfies all spatial constraints roughly from \(t = 15.56\) onwards. Unlike \textit{Case A}, where position consensus occurs, the minimum distance constraints prevent consensus, but the agents still achieve rendezvous near the target point \([2,0]\), with agent 1 ending up exactly there. The final positions after 300 seconds are \(\bxsh_1 = [2.00, 0.00]\), \(\bxsh_2 = [1.33, 0.28]\), and \(\bxsh_3 = [1.42, -0.43]\). Note that the resulting formation is not unique.
\begin{figure}[!tbp]
	\centering
	\begin{subfigure}[t]{0.43\linewidth}
		\centering
		\includegraphics[width=\linewidth]{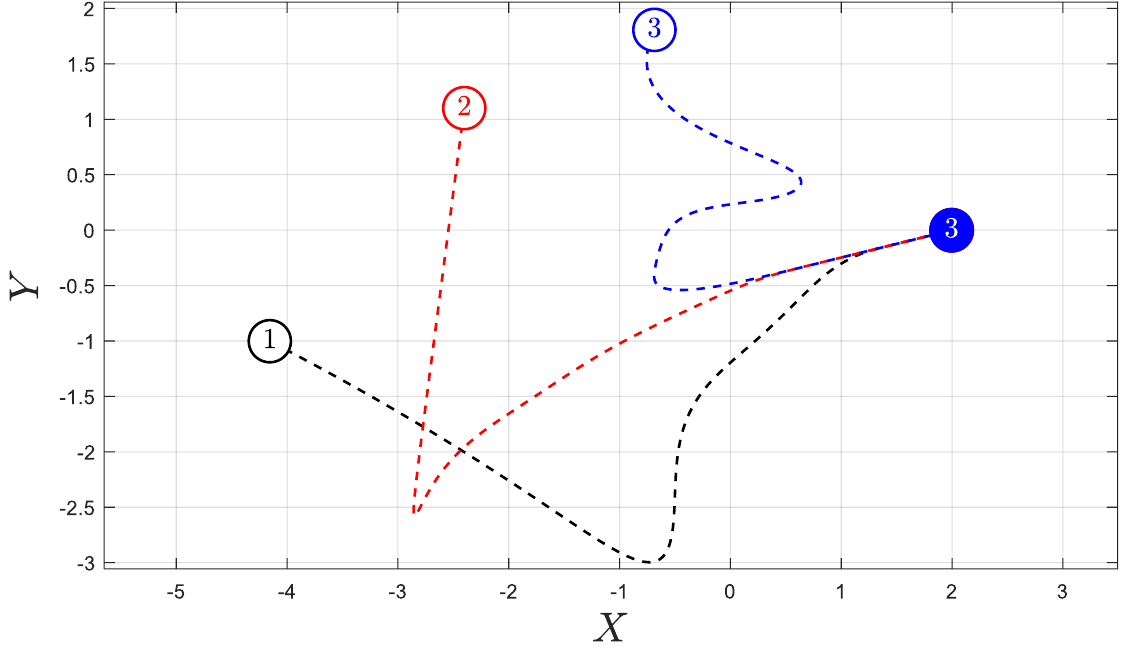}
	\end{subfigure}
	~~~~
	\begin{subfigure}[t]{0.48\linewidth}
		\centering
		\includegraphics[width=\linewidth]{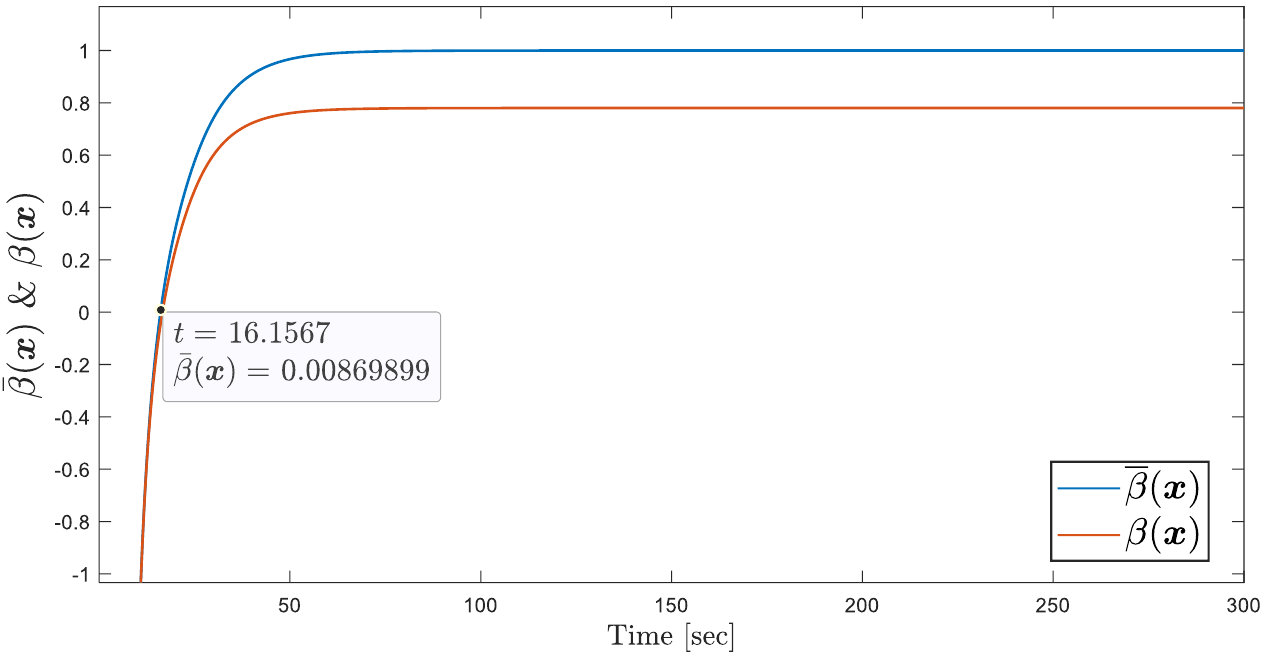}
	\end{subfigure}%
	
	\begin{subfigure}[t]{0.48\linewidth}
		\centering
		\includegraphics[width=\linewidth]{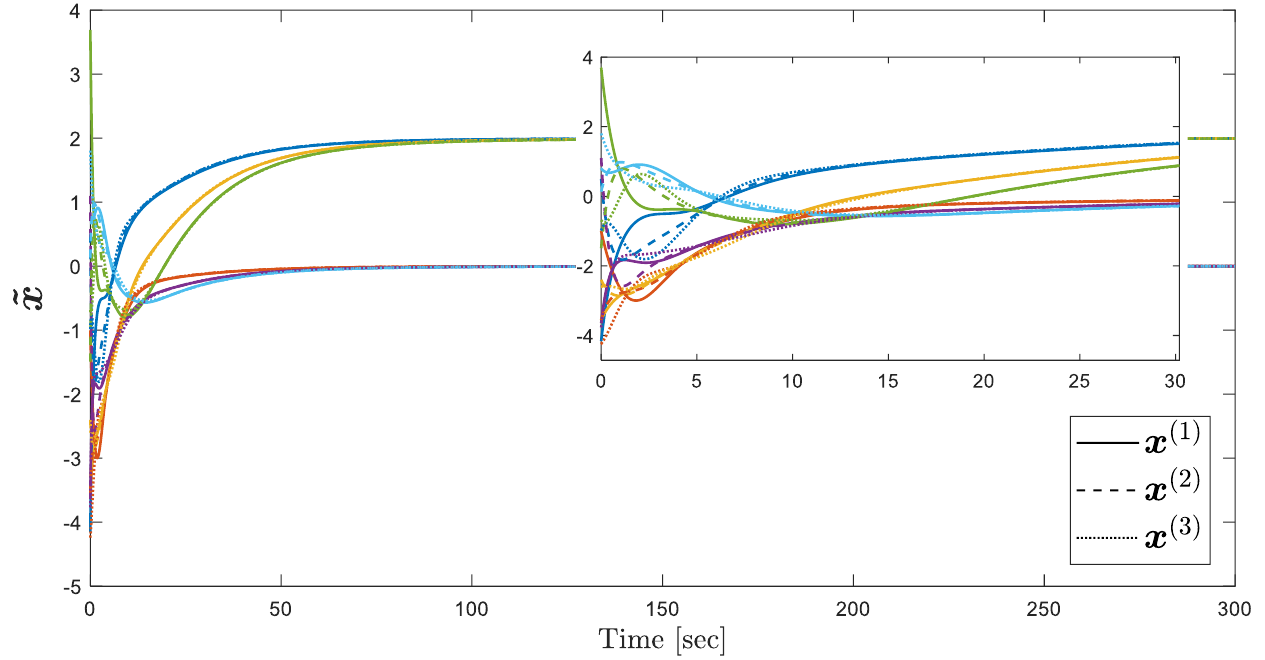}
	\end{subfigure}
	~
	\begin{subfigure}[t]{0.48\linewidth}
		\centering
		\includegraphics[width=\linewidth]{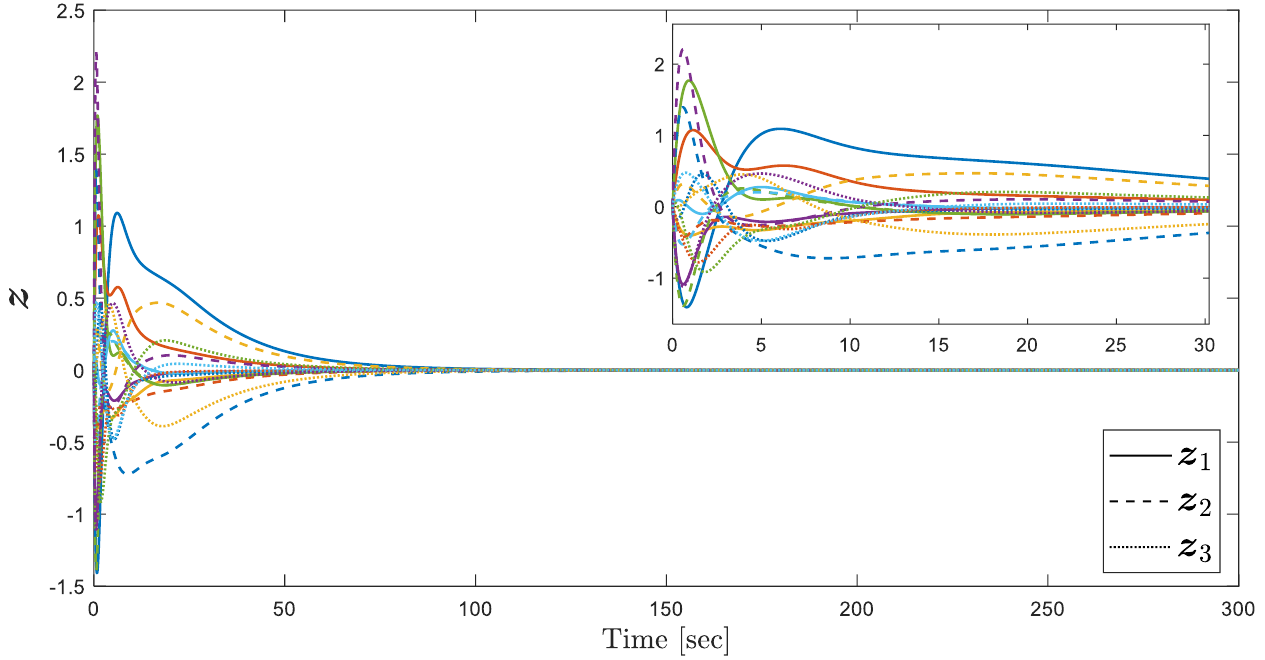}
	\end{subfigure}
	\caption{Simulation results under multi-agent spatial constraints of Case A. \vspace{-1.8cm}}
	\label{fig:simu_exam1_consensus}
\end{figure}

\begin{figure}[!tbp]
	\centering
	\begin{subfigure}[t]{0.43\linewidth}
		\centering
		\includegraphics[width=\linewidth]{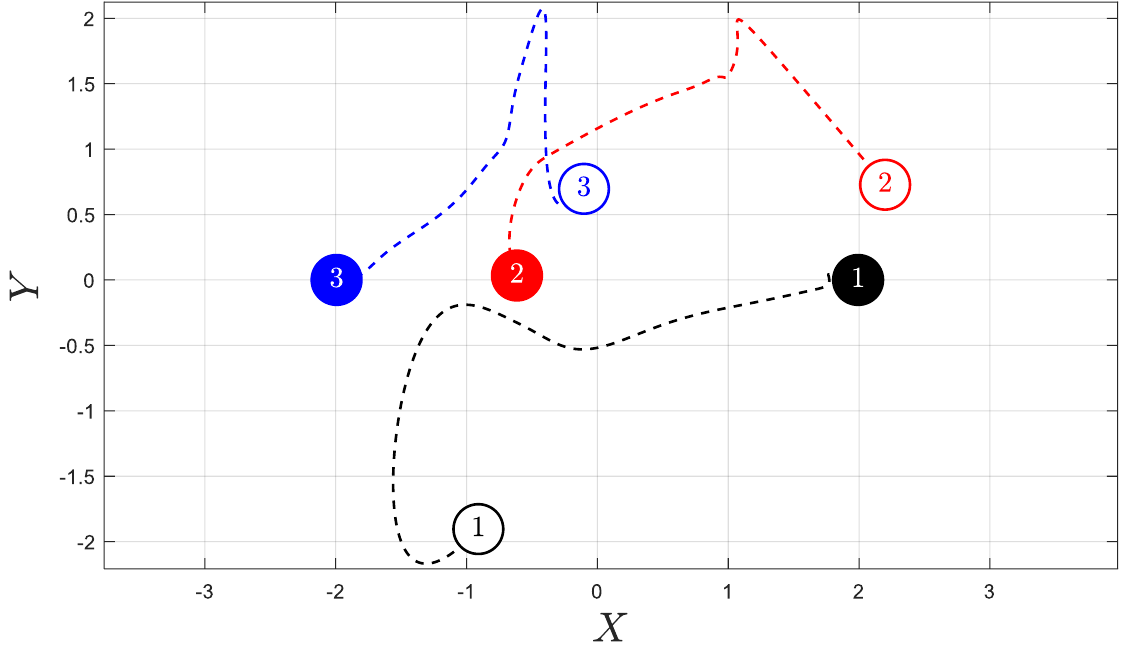}
	\end{subfigure}
	~~~~
	\begin{subfigure}[t]{0.48\linewidth}
		\centering
		\includegraphics[width=\linewidth]{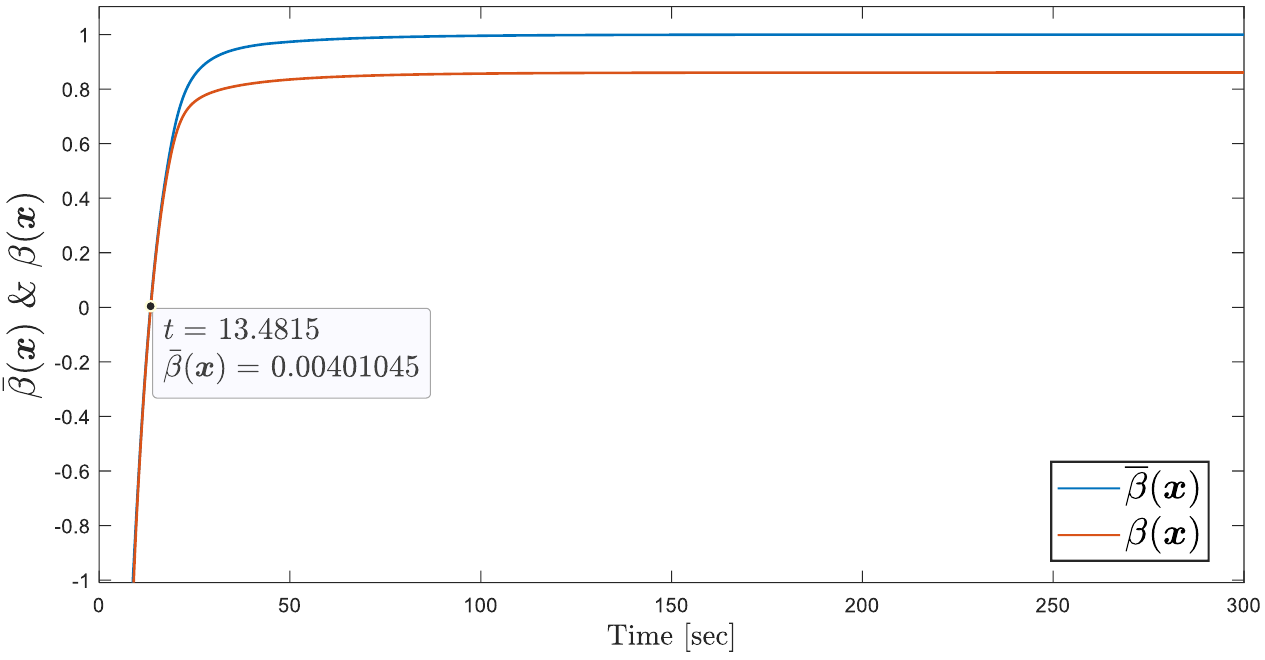}
	\end{subfigure}
	
	\begin{subfigure}[t]{0.48\linewidth}
		\centering
		\includegraphics[width=\linewidth]{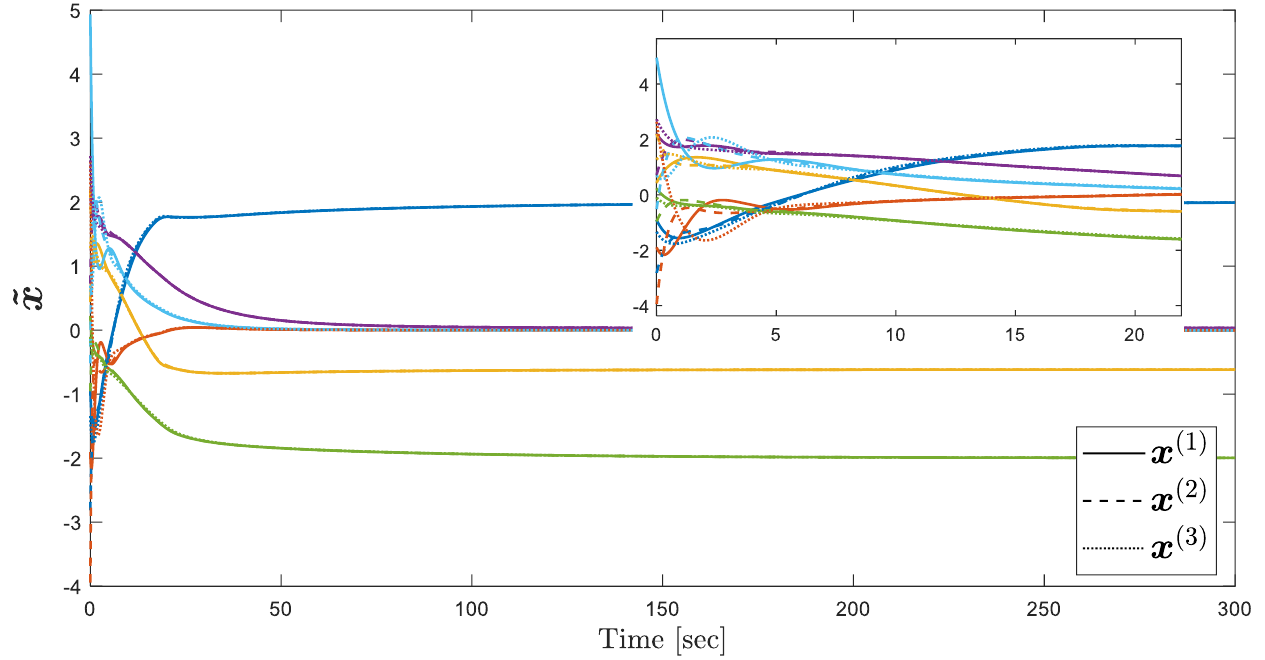}
	\end{subfigure}
	~
	\begin{subfigure}[t]{0.48\linewidth}
		\centering
		\includegraphics[width=\linewidth]{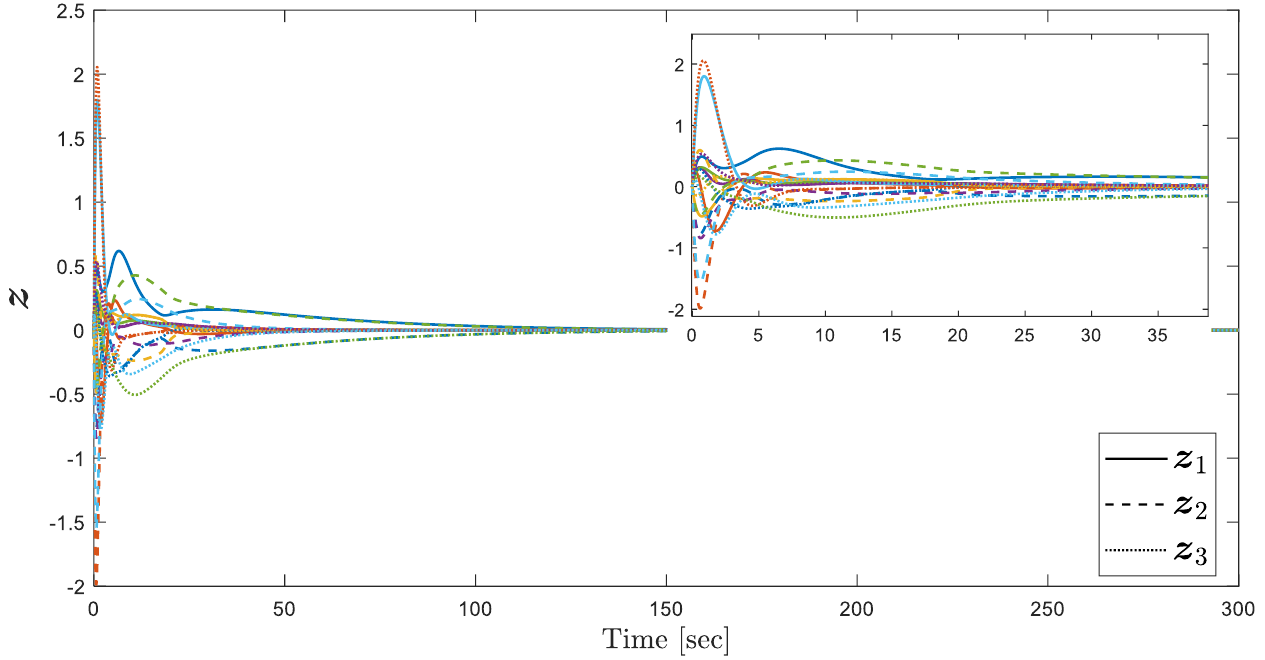}
	\end{subfigure}
	\caption{Simulation results under multi-agent spatial constraints of Case B. \vspace{-0.2cm}}
	\label{fig:simu_exam2_feasible}
\end{figure}

\begin{figure}[!tbp]
	\centering
	\begin{subfigure}[t]{0.43\linewidth}
		\centering
		\includegraphics[width=\linewidth]{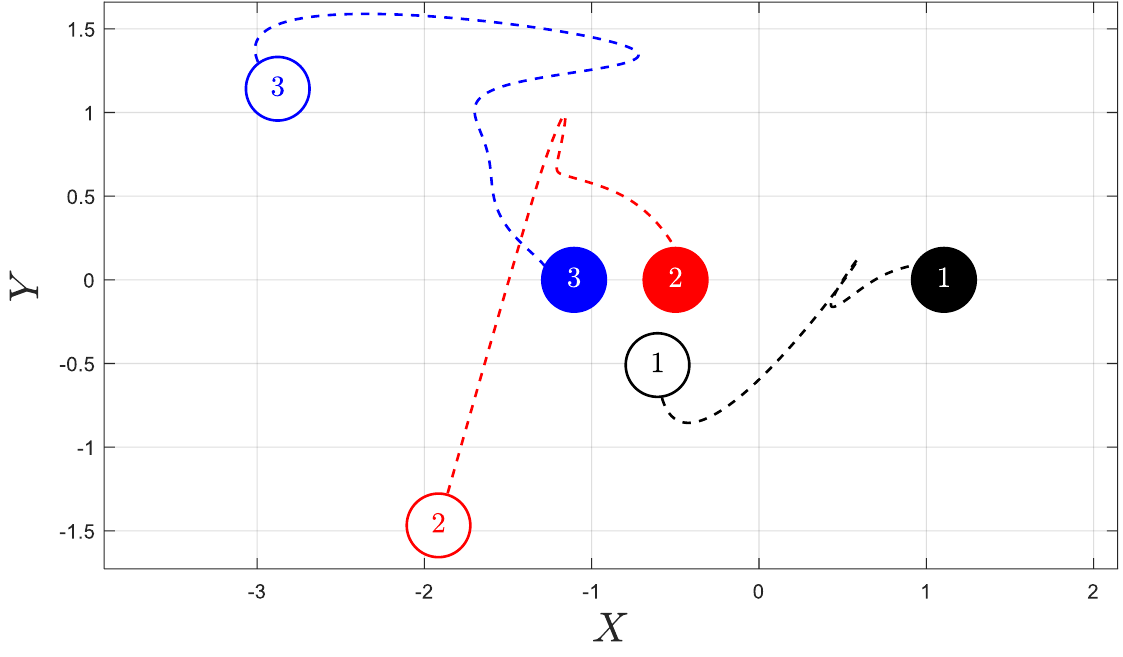}
	\end{subfigure}%
	~~~~~
	\begin{subfigure}[t]{0.48\linewidth}
		\centering
		\includegraphics[width=\linewidth]{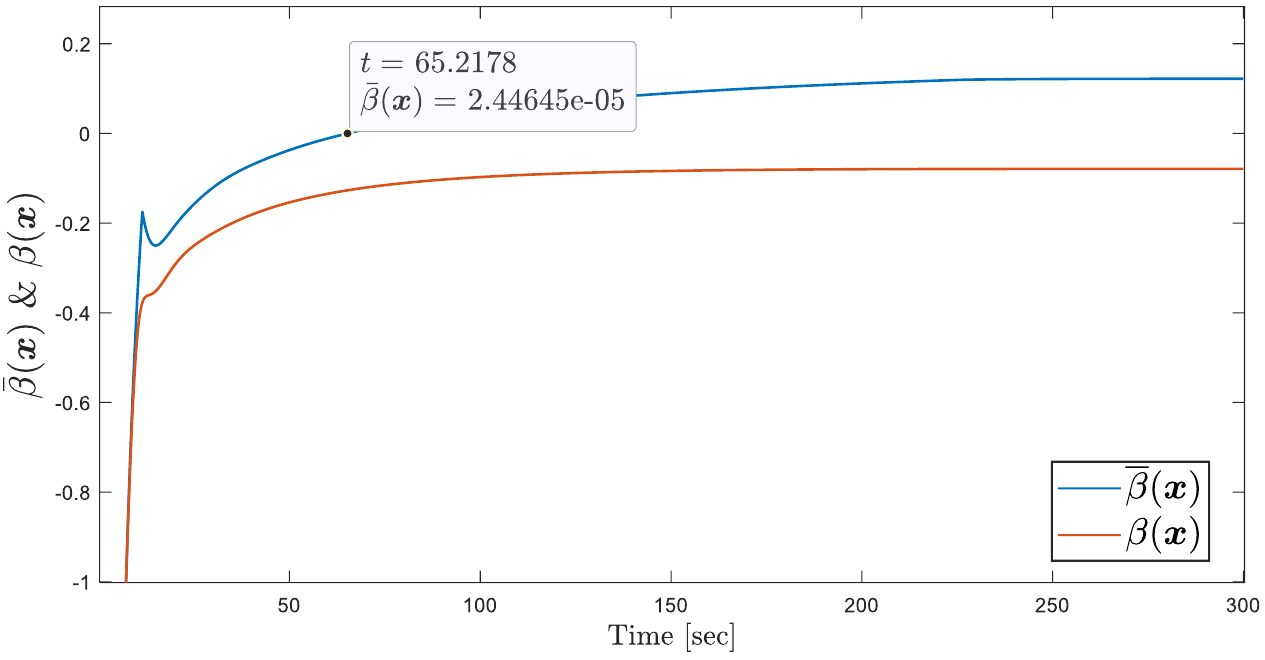}
	\end{subfigure}
	\caption{Simulation results under multi-agent spatial constraints of Case C.\vspace{-0.2cm}}
	\label{fig:simu_exam3_tightfeasible}
\end{figure}

\begin{figure}[!tbp]
	\centering
	\begin{subfigure}[t]{0.43\linewidth}
		\centering
		\includegraphics[width=\linewidth]{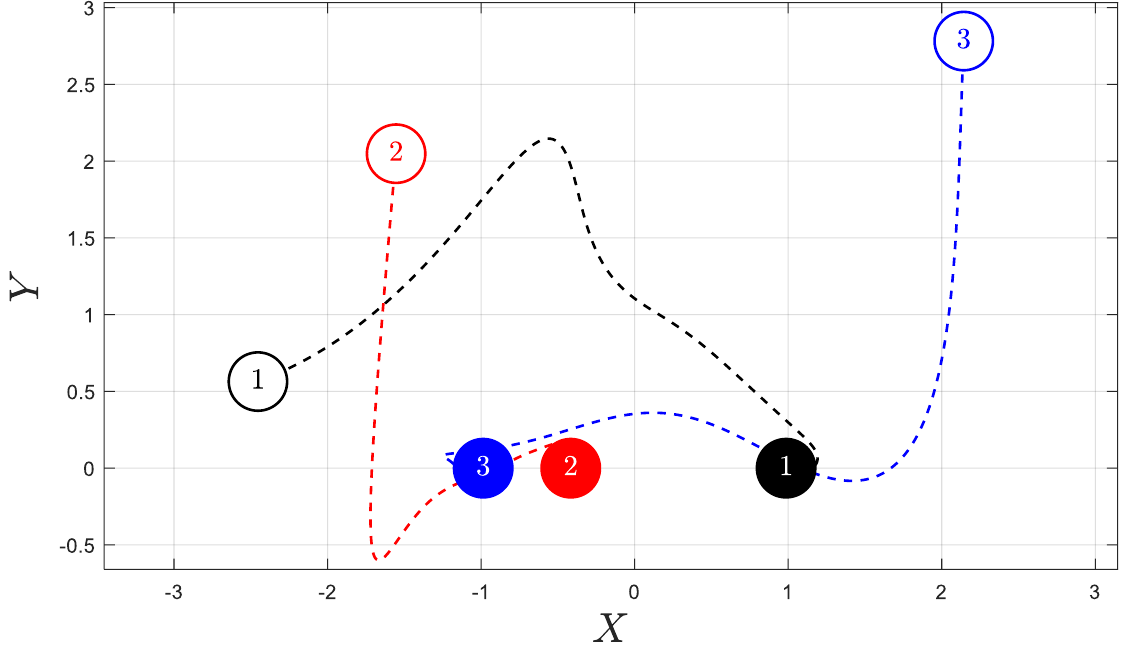}
	\end{subfigure}
	~~~~~
	\begin{subfigure}[t]{0.48\linewidth}
		\centering
		\includegraphics[width=\linewidth]{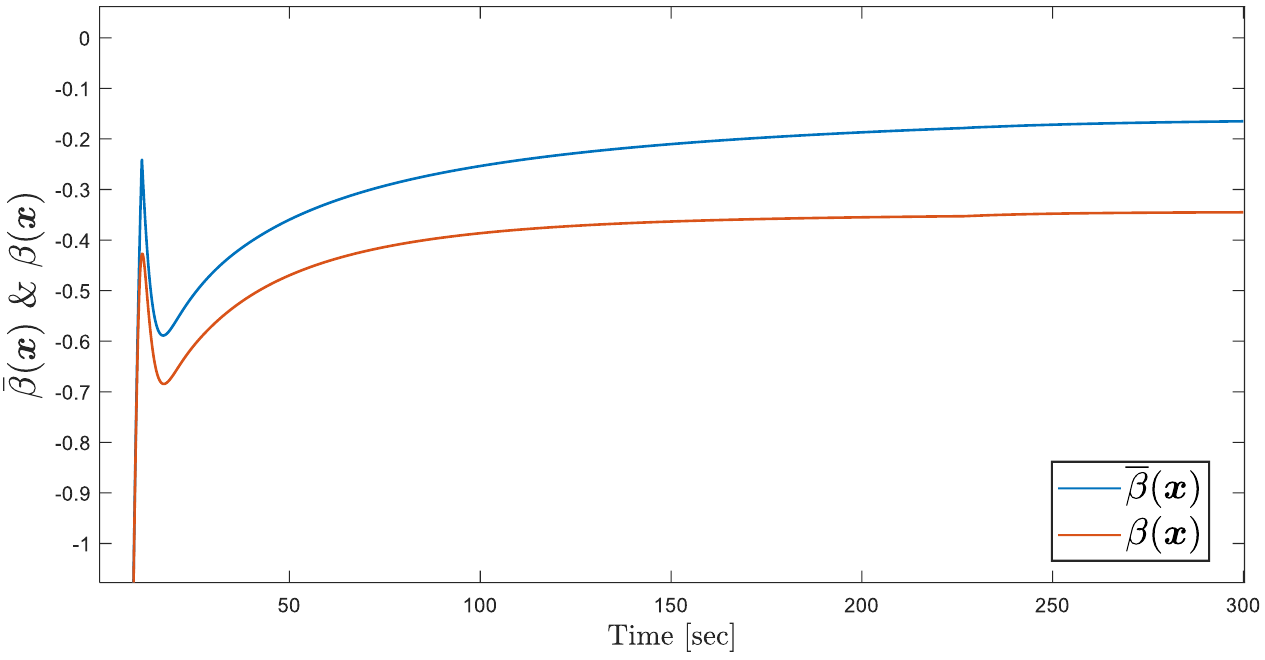}
	\end{subfigure}
	\caption{Simulation results under multi-agent spatial constraints of Case D.\vspace{-0.2cm}}
	\label{fig:simu_exam4_infeasible}
\end{figure}

\begin{figure}[!tbp]
	\centering
	\begin{subfigure}[t]{0.43\linewidth}
		\centering
		\includegraphics[width=\linewidth]{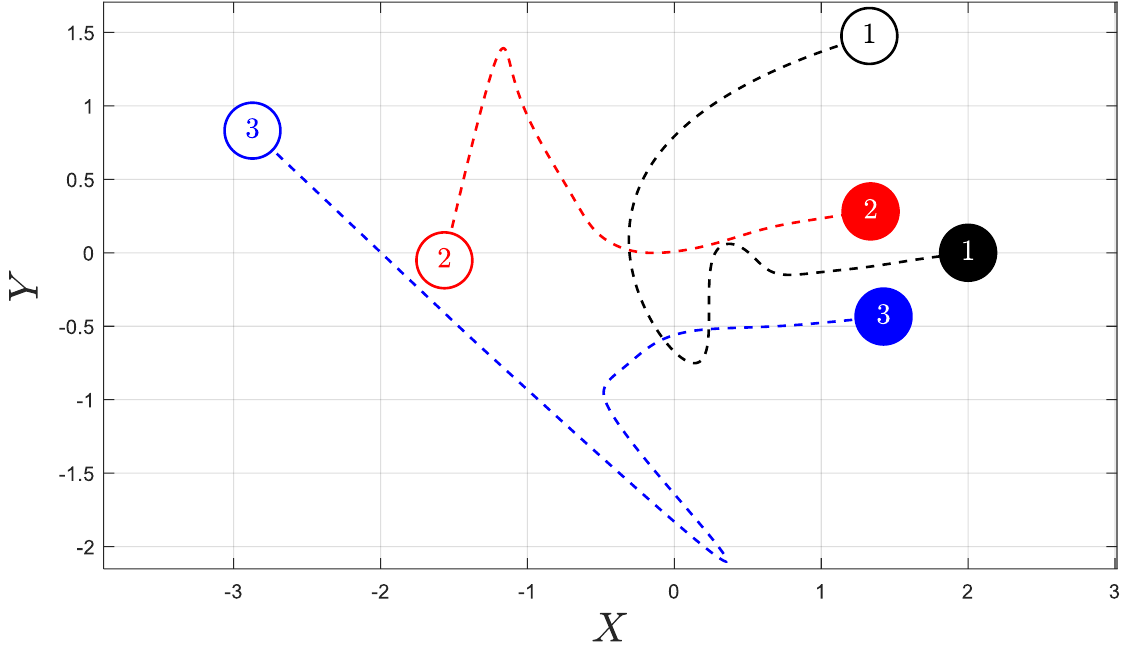}
	\end{subfigure}
	~~~~~
	\begin{subfigure}[t]{0.48\linewidth}
		\centering
		\includegraphics[width=\linewidth]{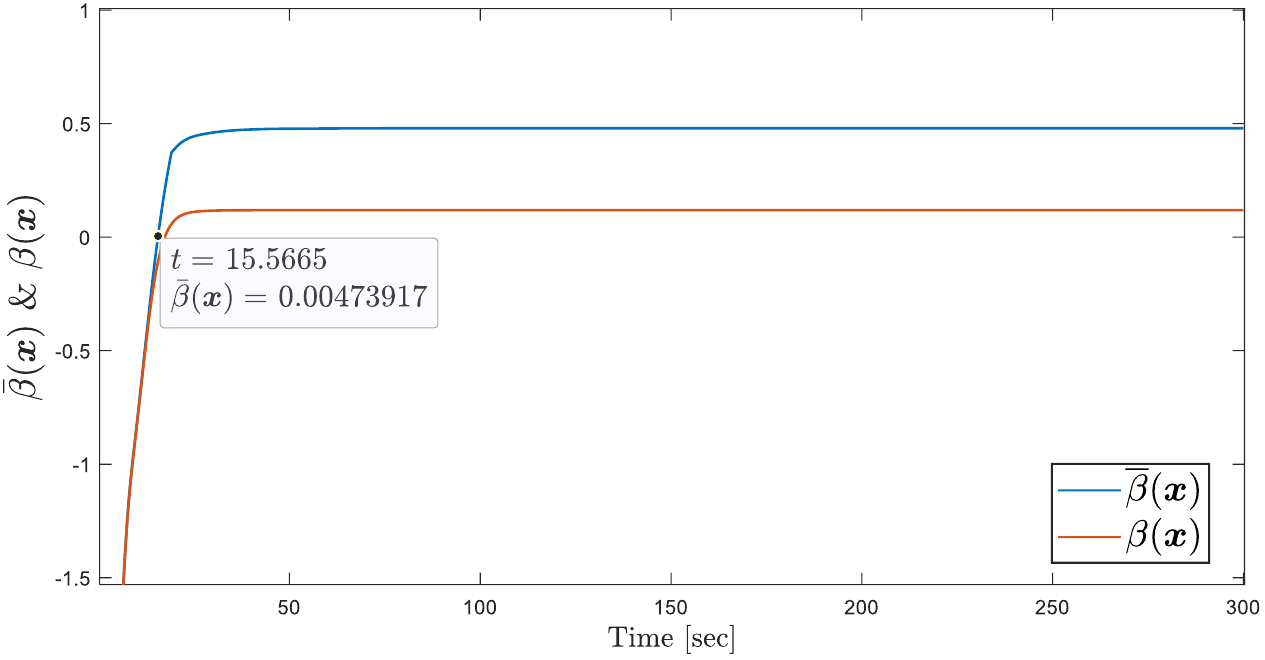}
	\end{subfigure}
	\caption{Simulation results under multi-agent spatial constraints of Case E.\vspace{-0.2cm}}
	\label{fig:simu_exam5_rendezvous}
\end{figure}

\section{Conclusion}
\label{sec:conclu}
In this work, we addressed long-term spatial constraints in multi-agent systems, where agents not only satisfy their own constraints but also assist others by forming a desired configuration collaboratively. We first formulated the problem as a centralized optimization, introducing an objective function whose positive values indicate constraint satisfaction, with higher values signifying better fulfillment. To design distributed control protocols, we derived an alternative objective function expressed as the sum of local functions, each dependent only on the agent's own constraints, This enabled us to propose a distributed optimization scheme approximating the centralized solution. We also established conditions for the convexity and strict convexity of the global objective function. Finally, using a continuous-time distributed optimization algorithm, we developed a control protocol for single integrator agents. We expect that extending our method will help tackle emerging challenges in multi-robot systems with spatial constraints, including collaborative coordination under spatiotemporal specifications and formation control.

\bibliographystyle{ieeetr}
\bibliography{Refs}

\end{document}